\theoremstyle{plain}
\newtheorem{theorem}{Theorem}[section]
\newtheorem{proposition}[theorem]{Proposition}
\newtheorem{lemma}[theorem]{Lemma}
\newtheorem{corollary}[theorem]{Corollary}
\theoremstyle{definition}
\theoremstyle{remark}
\newcommand{\bx}{{\mathbf{x}}}
\newcommand{\by}{{\mathbf{y}}}
\newcommand{\bn}{{\boldsymbol{n}}}
\newcommand{\zeros}{{\mathbf{0}}}
\newcommand{\eye}{{\mathbf{I}}}
\newcommand{\btheta}{{\boldsymbol{\theta}}}
\newcommand{\bSigma}{{\boldsymbol{\Sigma}}}
\newcommand{\beps}{{\boldsymbol{\epsilon}}}
\newcommand{\bGamma}{{\boldsymbol{\Gamma}}}
\newcommand{\cov}{\mathbb{C}\mathrm{ov}}
\newcommand{\tr}{\mathrm{Tr}}
\DeclareMathOperator{\Exp}{Exp}
\DeclareMathOperator{\Log}{Log}
\newcommand{\E}{\mathbb{E}}
\newcommand{\Cov}{\mathbb{C}\mathrm{ov}}
\newcommand{\balpha}{\boldsymbol{\alpha}}
\newcommand{\R}{\mathbb{R}}
\newcommand{\bbS}{\mathbb{S}}
\newcommand{\bfn}{\boldsymbol{n}}
\newcommand{\norm}[1]{\left\lVert #1 \right\rVert}
\newcommand{\inner}[2]{\left\langle #1,\ #2 \right\rangle}
\DeclareMathOperator*{\argmin}{arg\,min}
\icmltitlerunning{\hfill Multi-Fidelity Covariance Estimation~\hfill \thepage}
\begin{document}

\twocolumn[
\icmltitle{Multi-Fidelity Covariance Estimation in the Log-Euclidean Geometry}

\icmlsetsymbol{equal}{*}

\begin{icmlauthorlist}
\icmlauthor{Aimee Maurais}{MIT}
\icmlauthor{Terrence Alsup}{NYU}
\icmlauthor{Benjamin Peherstorfer}{NYU}
\icmlauthor{Youssef Marzouk}{MIT}
\end{icmlauthorlist}

\icmlaffiliation{MIT}{Massachusetts Institute of Technology, Cambridge, MA USA}
\icmlaffiliation{NYU}{Courant Institute of Mathematical Sciences, New York University, New York, NY USA}

\icmlcorrespondingauthor{Aimee Maurais}{maurais@mit.edu}
\icmlcorrespondingauthor{Terrence Alsup}{alsup@cims.nyu.edu}

\icmlkeywords{Monte Carlo methods, geometric learning, matrix computations, surrogate modeling, physical sciences, covariance estimation} %

\vskip 0.3in
]

\printAffiliationsAndNotice{}  %

\begin{abstract}
We introduce a multi-fidelity estimator of covariance matrices that employs the log-Euclidean geometry of the symmetric positive-definite manifold. The estimator fuses samples from a hierarchy of data sources of differing fidelities and costs for variance reduction while guaranteeing definiteness, in contrast with previous approaches. The new estimator makes covariance estimation tractable in applications where simulation or data collection is expensive; to that end, we develop an optimal sample allocation scheme that minimizes the mean-squared error of the estimator given a fixed budget. Guaranteed definiteness is crucial to metric learning, data assimilation, and other downstream tasks. Evaluations of our approach using data from physical applications (heat conduction, fluid dynamics) demonstrate more accurate metric learning and speedups of more than one order of magnitude compared to benchmarks. %
\end{abstract}

\section{Introduction}
\label{sec:intro}
Covariance estimation is a fundamental task for understanding the relationships between multiple features of data points. %
It arises in a wide range of machine learning applications such as metric learning \citep{Zadeh2016,pmlr-v37-huanga15}, graphical models \citep{JMLR:v15:loh14a,10.1214/11-EJS631}, and classification \citep{hastie01statisticallearning}, as well as in spatial statistics \citep{cressie2015statistics} and in science and engineering applications such as filtering and data assimilation \citep{doi:10.1137/16M105959X,https://doi.org/10.1002/qj.591,LawBook}. 
The canonical Monte Carlo approach to estimating the covariance matrix $\bSigma = \cov[\by]$ of a random variable $\by$ with $d$ components is to take $n$ samples $\by_1, \dots, \by_n$ %
and compute the sample covariance matrix
    \begin{equation}
        \hat{\bSigma}_n = \frac{1}{n}\sum\nolimits_{i=1}^n \left(\by_i - \bar{\by}\right)\left(\by_i - \bar{\by}\right)^{\top} \, ,
    \label{eq:SampleCov}
    \end{equation}
    where\footnote{If the mean of $\by$ is unknown, one can replace $\bar{\by}$ with the sample mean $\frac{1}{n}\sum\nolimits_{i=1}^n \by_i$ and normalize \eqref{eq:SampleCov} with $1/(n-1)$.} $\bar{\by} = \mathbb{E}[\by]$.
    The mean-squared error (MSE) of the estimator $\hat{\bSigma}_n$ in the Euclidean metric, i.e., $\mathbb{E}[d_{\mathrm{F}}(\hat{\bSigma}, \bSigma)^2]$ where $d_{\mathrm{F}}({\mathbf{A}}, {\mathbf{B}}) = \norm{ {\mathbf{A}} - {\mathbf{B}}}_{\mathrm{F}}$ and $\norm{\cdot}_{\mathrm{F}}$ denotes the Frobenius norm,
    decays as $\mathcal{O}(1/n)$ %
    if the samples are independent and identically distributed (i.i.d.). %
Furthermore, if $\bSigma$ lies in the space $\mathbb{S}_{++}^d$ of symmetric (strictly) positive definite (SPD) $d\times d$ matrices, then the sample covariance matrix $\hat{\bSigma}_n$ is in $\mathbb{S}_{++}^d$ almost surely for $n > d$. 
We are interested in situations where obtaining a realization of $\by$ incurs a high cost $c_0 > 0$ and thus straightforward application of the Monte Carlo estimator \eqref{eq:SampleCov} becomes computationally intractable. High sampling cost is a pervasive issue in science and engineering, where sampling typically corresponds to measuring the response of processes and systems that must be either numerically simulated or physically observed. We note that there exist a wide range of techniques for covariance estimation in high dimensions \citep{10.1214/009053607000000758,10.1214/12-AOS989,NIPS2012_6ba1085b,10.1214/11-AOS944,6584826,JMLR:v15:loh14a,10.1214/17-AOS1601}. Our focus here instead is on situations where \emph{drawing each sample incurs high cost}; see also the discussion in the Conclusions. 

\textbf{Multi-fidelity estimation}
The opportunity that we exploit is that in many applications 
there is a \emph{multi-fidelity hierarchy} of data sources from which correlated auxiliary samples can be obtained; see \citet{PeherstorferEtAl2018} for a survey. Multi-fidelity approaches have found increasing utility in machine learning---e.g., for
optimization \cite{NIPS2017_df1f1d20,pmlr-v115-wu20a,NEURIPS2020_60e1deb0}, classification and sequential learning \cite{NEURIPS2018_01a06836,pmlr-v161-gundersen21a,Palizhati2022}, and sampling \cite{https://doi.org/10.48550/arxiv.2210.01534,doi:10.1137/18M1229742,pmlr-v145-alsup22a}. %
Let $(\Omega, \mathcal{F}, \mathbb{P})$ be a probability triple. 
In the multi-fidelity setting, in addition to the original ``high-fidelity'' random variable $\by \equiv \by^{(0)}: \Omega \to \R^d$ generating the sample set $\{\by^{(0)}(\omega_i)\}_{i = 1}^{n_0}$, one has access to a number of low-fidelity surrogate random variables $\by^{(1)}, \dots, \by^{(\ell)}: \Omega \to \R^d$ and  obtains corresponding sets of samples $\{\by^{(\ell)}(\omega_i)\}_{i = 1}^{n_{\ell}}$, $\ell = 1, \dots, L$. For a particular $\omega \in \Omega$, the realizations $\by^{(0)}(\omega),\, \by^{(1)}(\omega), \dots, \by^{(L)}(\omega)$ are correlated, but the realizations within each set $\{\by^{(\ell)}(\omega_i)\}_{i = 1}^{n_{\ell}}$ are i.i.d., for $\ell = 0, \dots, L$. We assume that the surrogates $\by^{(1)}, \dots, \by^{(L)}$ are sorted in order of decreasing fidelity to $\by^{(0)}$ as measured by correlation coefficients $1 \geq |\rho_{1}| \geq |\rho_2| \geq \cdots \geq |\rho_L|$. Here $\rho_\ell$ is a notion of multivariate correlation between $\by^{(0)}(\omega)$ and $\by^{(\ell)}(\omega)$, $\ell = 1, \dots, L$, that will be formally defined in Section~\ref{sec:3.3}. Corresponding to decreasing fidelity, the surrogate random variables have decreasing sampling costs, $c_0 \geq c_1  \geq \dots \geq c_L$. %
    
\textbf{Loss of definiteness in Euclidean multi-fidelity covariance estimation} Direct application of multi-level and multi-fidelity Monte Carlo estimation \citep{Gi08,cliffe_multilevel_2011,Teckentrup2013,NME:NME4761} based on control variates to (co)variance estimation has been proposed in, e.g., \citet{Bierig2015,QianEtAl2018,MycekDeLozzo2019}. These multi-fidelity estimators, however, rely on differences of single-fidelity Monte Carlo estimators, which can lead to a loss of definiteness of the estimated covariance matrix, as we will detail in Section~\ref{sec:MFSetup}. Ignoring this loss-of-definiteness can result in errors in downstream tasks (e.g., instabilities in Kalman filtering, distances becoming negative in a learned metric, inadmissible graph structures). Accordingly, post-processing strategies \citep{HoelEtAl2016,Chernov2021} have been proposed that eliminate negative eigenvalues, but elimination of eigenvalues introduces potential for large errors, requires hand-tuning of thresholds, and often, as the authors note, incurs high computational cost; see also \citet{doi:10.1137/050624509,PopovEtAl2021}. %

\textbf{Our contribution:  Guaranteed positive multi-fidelity log-Euclidean covariance estimation}
We introduce a multi-fidelity covariance estimator based on control variates in the log-Euclidean geometry for $\mathbb{S}^d_{++}$ \citep{ArsignyEtAl2006}. The log-Euclidean geometry equips $\mathbb{S}^d_{++}$ with vector-space structure by placing it in one-to-one correspondence with $\mathbb{S}^d$, the vector space of $d \times d$ symmetric matrices, and defining notions of logarithmic addition and scalar multiplication for elements of $\mathbb{S}^{d}_{++}$. In particular, we make use of this vector space structure to safely ``subtract'' sample covariance matrices as part of a control variate construction in the tangent space to $\mathbb{S}_{++}^d$, which we identify with $\mathbb{S}^d$. The log-Euclidean vector space can be equipped with the log-Euclidean metric for SPD matrices, which itself induces Riemannian structure on $\mathbb{S}_{++}^d$. An additional advantage of the log-Euclidean geometry used here, as compared to many other geometries for $\bbS^d_{++}$ (see Section~\ref{sec:LEMF:Geometry}), is that computations remain efficient as the most complex operations required are matrix exponentials and logarithms.  %
    
\textbf{Summary of main contributions} The main contributions and key features of this work are summarized as follows:\\
\hspace*{0.2cm} \textbf{(a)} We introduce a multi-fidelity estimator that preserves definiteness of covariance matrices in the finite-sample regime and that has first-order minimal mean squared error (MSE) in the log-Euclidean metric.\\
\hspace*{0.2cm} \textbf{(b)} We provide analysis that leads to a first-order optimal sample allocation between the high-fidelity and surrogate models for gaining orders of magnitude speedups compared to using high-fidelity samples alone.\\
\hspace*{0.2cm} \textbf{(c)} We demonstrate the algorithmic benefits of the log-Euclidean geometry by showing that our estimator can be implemented with few lines of code (Appendix~\ref{sec:Appdx:Code}), relying on standard numerical linear algebra routines. This low barrier to implementation allows the estimator to be readily plugged into existing code with minimal effort and even as a post-processing step.   

\section{Multi-Fidelity Covariance Estimation in Euclidean Geometry and Loss of Definiteness} \label{sec:MFSetup}
Recall from the introduction that we generate a set of high-fidelity samples $\{\by^{(0)}(\omega_i)\}_{i = 1}^{n_0}$ and sets of surrogate samples $\{\by^{(\ell)}(\omega_i)\}_{i = 1}^{n_{\ell}}$ over different fidelity levels $\ell = 1, \dots, L$. The classical approach to multi-fidelity estimation, i.e., in the Euclidean geometry, is to use the surrogate samples to define control variates that reduce the variance of the standard Monte Carlo estimator \eqref{eq:SampleCov} \citep{Bierig2015,QianEtAl2018,MycekDeLozzo2019}. This approach yields the Euclidean multi-fidelity (EMF) covariance estimator
\begin{equation}
    \hat{\bSigma}_{\bn}^{\mathrm{EMF}}
    =
    \hat{\bSigma}_{n_0}^{(0)} + \sum\nolimits_{\ell=1}^L \alpha_{\ell} \left( \hat{\bSigma}_{n_{\ell}}^{(\ell)} - \hat{\bSigma}^{(\ell)}_{n_{\ell-1}} \right)  \, ,
    \label{eq:LCV}
    \end{equation}
    where $\balpha = [\alpha_1,\ldots,\alpha_L]^{\top} \in \mathbb{R}^L$ are the control variate weights
    and %
    \begin{equation}
    \begin{aligned}
    \hat{\bSigma}^{(\ell)}_{n_{\ell}} = &\frac{1}{n_\ell}\sum_{i=1}^{n_\ell} \left(\by^{(\ell)}(\omega_i) - \bar{\by}^{(\ell)}_{n_\ell} \right)\left(\by^{(\ell)}(\omega_i) - \bar{\by}^{(\ell)}_{n_\ell}\right)^{\top}, \\
    \hat{\bSigma}^{(\ell)}_{n_{\ell - 1}} = &\frac{1}{n_{\ell - 1}}\sum_{i=1}^{n_{\ell - 1}} \left(\by^{(\ell)}(\omega_i) - \bar{\by}^{(\ell)}_{n_\ell} \right)\left(\by^{(\ell)}(\omega_i) - \bar{\by}^{(\ell)}_{n_\ell}\right)^{\top}
    \end{aligned}
    \label{eq:Euc:SampleCov}
    \end{equation}
    are sample covariance matrices computed from $\{\by^{(\ell)}(\omega_i)\}_{i=1}^{n_\ell}$ and $\{\by^{(\ell)}(\omega_i)\}_{i=1}^{n_{\ell - 1}}$, $\ell = 1, \dots, L$. We require for well-definedness that $n_{\ell} \geq n_{\ell - 1}$, $\ell = 1, \dots, L$. 
    For levels $\ell, t \in \{0, \dots, L\}$ and integers $m_i, m_j \in \mathbb{Z}^+$, notice that the sample covariance matrices 
    \begin{gather*}
    \hat\bSigma_{m_i}^{(\ell)} = \frac{1}{m_i}\sum_{i'=1}^{m_i} \left(\by^{(\ell)}(\omega_{i'}) - \bar{\by}^{(\ell)}_{m_i} \right)\left(\by^{(\ell)}(\omega_{i'}) - \bar{\by}^{(\ell)}_{m_i}\right)^{\top}\,, \\
    \hat\bSigma_{m_j}^{(t)} = \frac{1}{m_j}\sum_{j'=1}^{m_j} \left(\by^{(t)}(\omega_{j'}) - \bar{\by}^{(t)}_{m_j} \right)\left(\by^{(t)}(\omega_{j'}) - \bar{\by}^{(t)}_{m_j}\right)^{\top}
    \end{gather*}
    are correlated because they are constructed from evaluations of $\by^{(\ell)}$ and $\by^{(t)}$ on the common set of events $\{\omega_i\}_{i=1}^{\min\{m_i, m_j\}}$. This correlation is the key ingredient to achieving variance reduction in \eqref{eq:LCV} over equivalent-cost single-fidelity estimators; see \cref{sec:3.3}.  
    
\textbf{Loss of definiteness} The use of regular subtraction and scalar multiplication in \cref{eq:LCV} treats the symmetric positive definite matrices $\hat\bSigma^{(\ell)}_{n_\ell}$ and $\hat\bSigma_{n_{\ell-1}}^{(\ell)}$, $\ell = 0, \dots, L$, as elements of the Euclidean vector space $\mathbb{S}^d$. However, SPD matrices do not constitute a vector space under the Euclidean geometry and thus the Euclidean multi-fidelity estimator \eqref{eq:LCV} may become indefinite due to the presence of subtraction. That is, positive definiteness of \eqref{eq:LCV} in the finite-sample regime cannot be guaranteed even if its constituent single-fidelity sample covariance matrices are definite. %

\section{Log-Euclidean Multi-Fidelity (LEMF) Covariance Estimator}
\subsection{Log-Euclidean Geometry and its Benefits}\label{sec:LEMF:Geometry}
We begin by orienting ourselves within the log-Euclidean geometry for $\mathbb{S}^d_{++}$ as defined in \citet{ArsignyEtAl2006}. For $\mathbf{A}$, $\mathbf{B} \in \mathbb{S}^d_{++}$ and $\lambda \in \mathbb{R}$ we have notions of \textit{logarithmic} addition $\oplus$ and scalar multiplication $\odot$,
\begin{equation} 
\begin{aligned}
    \mathbf{A} \oplus \mathbf{B} = & \Exp(\Log \mathbf{A} + \Log \mathbf{B})\,, \\
    \lambda \odot \mathbf{A} = & \Exp(\lambda \cdot \Log \mathbf{A} ) = \mathbf{A}^\lambda,
\end{aligned} 
\label{eq:logops}
\end{equation} 
corresponding to regular addition and scalar multiplication of $\log \bf A$ and $\log \bf B$ in $\mathbb{S}^d$. %
Here $\Exp: \mathbb{S}^d \to \mathbb{S}^d_{++}$ is the matrix-exponential and $\Log: \mathbb{S}^d_{++} \to \mathbb{S}^d$ is its inverse \citep[Def.~2.1]{ArsignyEtAl2007}. The mappings $\Log$ and $\Exp$ place $\mathbb{S}^d$ and $\mathbb{S}^d_{++}$ in one-to-one correspondence, enabling $\mathbb{S}^d_{++}$ to make use of the vector space structure of $\mathbb{S}^d$. 
The definitions of $\oplus$ and $\odot$ \eqref{eq:logops} satisfy the axioms of a vector space, and we can equip this vector space $\mathbb{S}^d_{++}(\oplus, \odot)$ with the log-Euclidean metric, 
\begin{equation}
        d_{\mathrm{LE}}({\mathbf{A}}, {\mathbf{B}}) = \norm{ \Log {\mathbf{A}} - \Log {\mathbf{B}}}_{\mathrm{F}} \, .
        \label{eq:LogEucMetric}
\end{equation}
Computations related to the log-Euclidean geometry can be performed economically because efficient algorithms exist for computing matrix exponentials and logarithms. Efficient computation is a major advantage as compared to the geometries induced by other metrics such as the affine invariant metric \citep{Bhatia2007,pmlr-v37-huanga15}. %

\textbf{Review of other geometries for $\mathbb{S}^d_{++}$} The set of $d \times d$ symmetric positive definite matrices $\mathbb{S}^d_{++}$ forms a Riemannian manifold embedded in the vector space of $d \times d$ symmetric matrices $\mathbb{S}^d$. One can equip $\mathbb{S}^d_{++}$ with a number of geometries in addition to the Euclidean (which we have shown to be problematic for multi-fidelity estimation) and log-Euclidean geometries discussed here. There is, e.g., the affine-invariant geometry \citep{Bhatia2007}, often billed as the ``canonical choice,'' which gives rise to the affine-invariant metric,
 \begin{equation}
        d_{\mathrm{Aff}}( {\mathbf{A}},\ {\mathbf{B}} ) = \norm{\Log \left({\mathbf{A}}^{-1}{\mathbf{B}}\right)}_{\mathrm{F}} \, .
        \label{eq:AIMetric}
    \end{equation}
Other choices include the Bures-Wasserstein geometry, arising from the $L^2$-Wasserstein distance between multivariate, nondegenerate, mean-zero Gaussians \citep{MalagoEtAl2018}; and the Log-Cholesky geometry, obtained by parametrizing SPD matrices in terms of their Cholesky factors and defining a Riemannian metric on the space of lower-triangular matrices with positive diagonal entries \citep{Lin2019}. Choice of geometry for $\mathbb{S}^d_{++}$ is application-dependent and often depends on factors such as cost to compute geodesic distance, availability of the Riemannian exponential and logarithmic maps in closed form, and whether the so-called ``swelling effect'' \citep{ArsignyEtAl2006} is an issue. We choose the log-Euclidean geometry for its computational advantages, described at the end of the previous paragraph.

\subsection{Log-Euclidean Multi-Fidelity Estimation}
We construct the log-Euclidean multi-fidelity (LEMF) covariance estimator via linear control variates in the log-Euclidean geometry for $\mathbb{S}^d_{++}$, %
\begin{multline}
    \hat\bSigma_{\bn}^{\rm LEMF} = \hat \bSigma_{n_0}^{(0)} \oplus \; \bigoplus_{\ell = 1}^L \alpha_\ell \odot \left(\hat\bSigma^{(\ell)}_{n_\ell} \ominus \hat\bSigma^{(\ell)}_{n_{\ell - 1}} \right) =\\
    \Exp\left( \Log \hat{\bSigma}_{n_0}^{(0)} + \sum_{\ell=1}^L \alpha_{\ell} \left( \Log \hat{\bSigma}^{(\ell)}_{n_{\ell}} - \Log \hat{\bSigma}^{(\ell)}_{n_{\ell-1}} \right)  \right),
\label{eq:TS}
\end{multline}
where we have used the convention $\mathbf{A} \ominus \mathbf{B} = \mathbf{A} \oplus -1\odot \mathbf{B}$.
The estimator \eqref{eq:TS} has the same algebraic structure as \eqref{eq:LCV}, but is guaranteeably positive definite at finite $\bn = (n_0, n_1, \ldots, n_L)$ as the following proposition shows:
\begin{proposition} 
The log-Euclidean multi-fidelity covariance estimator \eqref{eq:TS} exists and is positive definite almost surely whenever $n_0, n_1, \dots, n_L > d$.  
\label{prop:lemfExists}
\end{proposition}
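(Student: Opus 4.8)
The plan is to exploit the fact that the construction \eqref{eq:TS} decomposes into three clean stages---taking matrix logarithms of SPD sample covariances, forming a linear combination in the symmetric-matrix space $\mathbb{S}^d$, and applying the matrix exponential---each of which respects the relevant structure. Because positive definiteness then follows automatically from the mapping properties of $\Exp$, the real content lies in establishing that every constituent sample covariance matrix is genuinely SPD so that its logarithm is defined.

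First I would settle \emph{existence}. The estimator is well-defined precisely when each logarithm in \eqref{eq:TS} exists, i.e.\ when $\hat{\bSigma}_{n_0}^{(0)}$, $\hat{\bSigma}^{(\ell)}_{n_\ell}$, and $\hat{\bSigma}^{(\ell)}_{n_{\ell-1}}$ (for $\ell = 1, \dots, L$) all lie in $\mathbb{S}^d_{++}$, since $\Log$ is defined on $\mathbb{S}^d_{++}$. Each is an empirical covariance built from i.i.d.\ samples of a single fidelity level, and the relevant sample counts are exactly the entries of $\bn = (n_0, \dots, n_L)$---in particular $\hat{\bSigma}^{(\ell)}_{n_{\ell-1}}$ uses $n_{\ell-1}$ samples, and $n_0, \dots, n_{L-1}$ all exceed $d$ by hypothesis. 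Invoking the fact recalled in \cref{sec:intro}, that the sample covariance of an $\mathbb{S}^d_{++}$-valued covariance is itself SPD almost surely once the sample count exceeds $d$, gives that all constituent matrices are SPD almost surely, so their logarithms exist and the argument of $\Exp$ is well-defined almost surely.

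Next I would observe that this argument is a \emph{symmetric} matrix. Each $\Log$ maps $\mathbb{S}^d_{++}$ into $\mathbb{S}^d$, and $\mathbb{S}^d$ is closed under real scalar multiplication and addition, so the combination
\[
\Log \hat{\bSigma}_{n_0}^{(0)} + \sum_{\ell=1}^L \alpha_{\ell}\left(\Log \hat{\bSigma}^{(\ell)}_{n_{\ell}} - \Log \hat{\bSigma}^{(\ell)}_{n_{\ell-1}}\right) \in \mathbb{S}^d
\]
for any real weights $\balpha$. Positive definiteness then follows because $\Exp: \mathbb{S}^d \to \mathbb{S}^d_{++}$ sends every symmetric matrix to an SPD one (the eigenvalues of $\Exp(\mathbf{M})$ are $e^{\lambda_i} > 0$ for the real eigenvalues $\lambda_i$ of $\mathbf{M}$). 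Hence $\hat\bSigma_{\bn}^{\rm LEMF} \in \mathbb{S}^d_{++}$ whenever the argument is defined, which is almost surely under $n_0, \dots, n_L > d$.

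The only step requiring genuine care---and the one I would treat as the main obstacle---is the almost-sure SPD claim for the constituents, particularly $\hat{\bSigma}^{(\ell)}_{n_{\ell-1}}$, which is centered by $\bar{\by}^{(\ell)}_{n_\ell}$ rather than by its own sample mean. I would justify it by noting that each such matrix is a sum of $n_{\ell-1} > d$ rank-one outer products of vectors drawn from a distribution with SPD covariance; absent a degenerate lower-dimensional support, any $d$ of these (centered) vectors are linearly independent with probability one, forcing the sum to be full rank and hence positive definite almost surely. Everything else is structural and requires no computation.
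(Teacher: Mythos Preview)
Your proposal is correct and follows essentially the same route as the paper: the paper argues that the sample covariances are SPD almost surely when each $n_\ell > d$, then invokes the vector-space structure of $\mathbb{S}^d_{++}(\oplus,\odot)$ to conclude the estimator is SPD, which is exactly your steps unpacked (Log into $\mathbb{S}^d$, linear combination stays in $\mathbb{S}^d$, Exp back into $\mathbb{S}^d_{++}$). If anything you are slightly more careful than the paper, which does not single out the centering of $\hat{\bSigma}^{(\ell)}_{n_{\ell-1}}$ by $\bar{\by}^{(\ell)}_{n_\ell}$ as a point needing attention.
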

In addition to interpreting \eqref{eq:TS} as a linear control variate construction in the log-Euclidean geometry for $\bbS^d_{++}$, we can equivalently view the LEMF estimator as a Euclidean linear control variate estimate of $\Log \bSigma$ on $\bbS^d$,
\begin{equation} 
 \Log \hat\bSigma_{\bn}^{\rm LEMF}\hspace*{-0.15cm} =
     \Log \hat{\bSigma}_{n_0}^{(0)} + \sum_{\ell=1}^L \alpha_{\ell} \left( \Log \hat{\bSigma}^{(\ell)}_{n_{\ell}}\hspace*{-0.1cm} - \hspace*{-0.1cm}\Log \hat{\bSigma}^{(\ell)}_{n_{\ell-1}} \right),
 \label{eq:logTS}
\end{equation} 
which we then map to $\mathbb{S}^d_{++}$ via the matrix exponential. Variance reduction in \eqref{eq:logTS} is achieved on $\bbS^d$ via the standard Euclidean control variate framework and then propagated to $\bbS^d_{++}$ via $\Exp(\cdot)$. {Furthermore, \eqref{eq:TS} has the form of a log-Euclidean Fréchet average of $\hat{\bSigma}_{n_0}^{(0)}$ and $L$ small perturbations induced by the surrogate sample covariance matrices; see \cref{app:frechet} for details.}

The LEMF estimator requires computing $2L + 1$ matrix logarithms and one matrix exponential. %
Note that obtaining the sample sets $\{\by^{(\ell)}(\omega_i)\}_{i=1}^{n_\ell}$, $\ell = 0, \dots, L$, is typically much more computationally expensive than computing matrix logarithms and exponentials. %
\subsection{The MSE of the LEMF Estimator}
\label{sec:3.3}
For the following analysis, we assume without loss of generality that $\by^{(\ell)}, \ell = 0, \dots, L$, have zero mean, for ease of exposition. As introduced in \cref{sec:intro}, the surrogates $\by^{(1)}, \dots, \by^{(L)}$ are ordered by decreasing fidelity,
\[
1 = \rho_0 > |\rho_1| > \cdots > |\rho_L| \geq \rho_{L + 1} = 0,
\]
where $\rho_\ell$ is a multivariate generalization of the Pearson correlation coefficient,
\begin{equation}
        \rho_{\ell} = \frac{ \tr \left( \cov[\by^{(0)}(\by^{(0)})^{\top},\ (\by^{(\ell)})(\by^{(\ell)})^{\top}] \right)  }{ \sigma_0 \sigma_{\ell}}\,, %
    \label{eq:GeneralizedCorrelation}
    \end{equation}
with generalized variances defined as 
    \begin{equation}
        \sigma_{\ell}^2 = \tr(\cov[\by^{(\ell)}(\by^{(\ell)})^{\top}]), \quad \ell = 0, \dots, L.
    \label{eq:GeneralizedVariance}
    \end{equation}
The covariance of $\by^{(\ell)}(\by^{(\ell)})^\top$, $\ell = 0, \dots, L$ is a symmetric positive semidefinite linear operator from $\mathbb{S}^d$ to $\mathbb{S}^d$,
\begin{multline}
\Cov[\by^{(\ell)}(\by^{(\ell)})^\top] = \\ \E \left[ \left(\by^{(\ell)}(\by^{(\ell)})^\top - \bSigma^{(\ell)} \right) \otimes \left(\by^{(\ell)}(\by^{(\ell)})^\top - \bSigma^{(\ell)} \right) \right],
\label{eq:autocov}
\end{multline}
where we define $\bSigma^{(\ell)} = \E[\by^{(\ell)}(\by^{(\ell)})^\top]$. Similarly, the cross-covariance between $\by^{(\ell)}(\by^{(\ell)})^\top$ and $\by^{(m)}(\by^{(m)})^\top$, $\ell, m \in \{0, \dots, L\}$ is 
\begin{multline}
\Cov \left[\by^{(\ell)}(\by^{(\ell)})^\top,\, \by^{(m)}(\by^{(m)})^\top \right] = \\
\E \left[ \left(\by^{(\ell)}(\by^{(\ell)})^\top - \bSigma^{(\ell)} \right) \otimes \left(\by^{(m)}(\by^{(m)})^\top - \bSigma^{(m)} \right) \right].
\label{eq:crosscov}
\end{multline}
In writing \eqref{eq:autocov} and \eqref{eq:crosscov} we do not invoke any particular representation for linear operators on $\mathbb{S}^d$. Rather, we only assume that $\otimes$ is an outer product for symmetric matrices compatible %
with the Frobenius inner product, that is, 
\[
\tr(\mathbf{A} \otimes \mathbf{B}) = \langle \mathbf{A}, \mathbf{B} \rangle_F, \quad \mathbf{A}, \mathbf{B} \in \mathbb{S}^d.
\]

The following proposition derives the MSE of the log-Euclidean estimator in the log-Euclidean metric \eqref{eq:LogEucMetric}. 
\begin{proposition}
Assume that $\|\bSigma - \eye\|_{\mathrm{F}} < h/4$ with $h < 1$ and $\|\bSigma - \bSigma^{(\ell)}\|_{\mathrm{F}} < h/4$ for $\ell = 1, \dots, L$. Then, for coefficient vector $\balpha \in \mathbb{R}^n$ and sufficiently large numbers of samples $\bn \in \mathbb{N}^{L+1}$, the MSE in the log-Euclidean metric of the LEMF estimator defined in~\eqref{eq:TS} is
    \begin{equation}
    \begin{split}
        &\mathbb{E}\left[ d_{\mathrm{LE}}\left( \hat{\bSigma}^{\mathrm{LEMF}}_{\bn},\   \bSigma \right)^2 \right] = \E \left[ \left\|\Log \hat\bSigma_{\bn}^{\rm LEMF} - \Log \bSigma \right\|_{\rm F}^2 \right] \\
        &= \frac{\sigma_0^2}{n_0}
        +
        \sum_{\ell=1}^L  \left(\frac{1}{n_{\ell-1}} - \frac{1}{n_{\ell}}\right)(\alpha_{\ell}^2\sigma_{\ell}^2 - 2\alpha_{\ell}\rho_{\ell}\sigma_{\ell}\sigma_0) + \mathcal{O}(h^2)
        \, .
    \end{split}
    \label{eq:LogEMSETS}
    \end{equation}
\label{prop:MSETS}
\end{proposition}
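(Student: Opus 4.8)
The plan is to exploit~\eqref{eq:logTS}, which expresses $\Log\hat\bSigma^{\rm LEMF}_{\bn}$ as an ordinary linear control-variate estimator living in the vector space $\bbS^d$. Since $d_{\rm LE}(\hat\bSigma^{\rm LEMF}_{\bn},\bSigma)^2 = \norm{\Log\hat\bSigma^{\rm LEMF}_{\bn} - \Log\bSigma}_{\mathrm F}^2$ and $\bSigma=\bSigma^{(0)}$, I would write the log-error as a high-fidelity term plus surrogate differences and then linearize each matrix logarithm about the identity. Under the hypotheses $\norm{\bSigma - \eye}_{\mathrm F} < h/4$ and $\norm{\bSigma - \bSigma^{(\ell)}}_{\mathrm F} < h/4$, every true covariance lies within $h/2$ of $\eye$, and for sufficiently large $\bn$ the sample covariances $\hat\bSigma^{(\ell)}_m$ concentrate inside an $\mathcal{O}(h)$-ball of $\eye$ as well. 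On that ball one has $\Log(\eye + \mathbf{X}) = \mathbf{X} + \mathbf{R}(\mathbf{X})$ with $\norm{\mathbf{R}(\mathbf{X})}_{\mathrm F} \le C\norm{\mathbf{X}}_{\mathrm F}^2$, so substituting $\mathbf{X}=\hat\bSigma^{(\ell)}_m-\eye$ and collecting the linear pieces yields
\begin{equation*}
\Log\hat\bSigma^{\rm LEMF}_{\bn} - \Log\bSigma = \mathbf{L} + \mathbf{R}, \quad \mathbf{L} := \left(\hat\bSigma^{(0)}_{n_0} - \bSigma\right) + \sum_{\ell=1}^L \alpha_\ell\left(\hat\bSigma^{(\ell)}_{n_\ell} - \hat\bSigma^{(\ell)}_{n_{\ell-1}}\right),
\end{equation*}
i.e. $\mathbf{L}$ is exactly the error of the Euclidean estimator~\eqref{eq:LCV} and $\mathbf{R}$ gathers the quadratic-and-higher remainders. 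The claim then follows from $\E\norm{\mathbf{L}+\mathbf{R}}_{\mathrm F}^2 = \E\norm{\mathbf{L}}_{\mathrm F}^2 + 2\E\inner{\mathbf{L}}{\mathbf{R}}_{\mathrm F} + \E\norm{\mathbf{R}}_{\mathrm F}^2$ once I show the first term equals~\eqref{eq:LogEMSETS} (without the $\mathcal{O}(h^2)$) and the last two are $\mathcal{O}(h^2)$.

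For $\E\norm{\mathbf{L}}_{\mathrm F}^2$, every constituent is centered ($\E\hat\bSigma^{(0)}_{n_0}=\bSigma$ and each difference has zero mean), so I would expand the square into the variance of $\hat\bSigma^{(0)}_{n_0}$, its cross terms with each surrogate difference, and the surrogate-difference (co)variances. Everything rests on the elementary identity, valid because the samples are i.i.d.\ within each level and share their first $\min(m,m')$ events across levels,
\begin{equation*}
\E\inner{\hat\bSigma^{(\ell)}_m - \bSigma^{(\ell)}}{\hat\bSigma^{(t)}_{m'} - \bSigma^{(t)}}_{\mathrm F} = \frac{\min(m,m')}{m\,m'}\,\tr\!\left(\Cov\!\left[\by^{(\ell)}(\by^{(\ell)})^\top,\, \by^{(t)}(\by^{(t)})^\top\right]\right),
\end{equation*}
which I would derive by writing each sample covariance as an average of rank-one terms, killing the independent cross-event contributions, and applying the compatibility $\tr(\mathbf A\otimes\mathbf B)=\inner{\mathbf A}{\mathbf B}_{\mathrm F}$ together with~\eqref{eq:GeneralizedCorrelation}--\eqref{eq:crosscov}. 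Setting $\ell=t$ produces the $\sigma_\ell^2$ terms and $(\ell,t)=(0,\ell)$ produces $\rho_\ell\sigma_0\sigma_\ell$; the monotonicity $n_0\le n_1\le\cdots\le n_L$ (implied by the stated $n_\ell\ge n_{\ell-1}$) collapses the $\min$'s and yields the common coefficient $\tfrac1{n_{\ell-1}}-\tfrac1{n_\ell}$ on both the $\alpha_\ell^2\sigma_\ell^2$ and $-2\alpha_\ell\rho_\ell\sigma_\ell\sigma_0$ contributions, while $\E\norm{\hat\bSigma^{(0)}_{n_0}-\bSigma}_{\mathrm F}^2=\sigma_0^2/n_0$. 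A point I would emphasize is that cross terms between two \emph{different} surrogate differences ($\ell\ne\ell'$) cancel identically: for $\ell<\ell'$ all four $\min$'s equal the smaller level-$\ell$ sample size, so the four signed terms sum to zero, which is why no $\ell\ne\ell'$ couplings appear in~\eqref{eq:LogEMSETS}.

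The technical crux is bounding $2\E\inner{\mathbf L}{\mathbf R}_{\mathrm F} + \E\norm{\mathbf R}_{\mathrm F}^2$ by $\mathcal{O}(h^2)$. Here $\mathbf R$ is built from terms $-\tfrac12\big[(\hat\bSigma^{(\ell)}_m - \eye)^2 - (\bSigma^{(\ell)}-\eye)^2\big]$ plus higher powers; writing the fluctuation $\mathbf F^{(\ell)}_m=\hat\bSigma^{(\ell)}_m-\bSigma^{(\ell)}$, each such term splits into an $\mathcal{O}(h)\norm{\mathbf F^{(\ell)}_m}_{\mathrm F}$ piece (from cross terms with the deterministic $\bSigma^{(\ell)}-\eye=\mathcal{O}(h)$) and an $\mathcal{O}(\norm{\mathbf F^{(\ell)}_m}_{\mathrm F}^2)$ piece. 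The plan is to control these via Cauchy--Schwarz and the moment bounds $\E\norm{\mathbf F^{(\ell)}_m}_{\mathrm F}^2=\sigma_\ell^2/m$ and $\E\norm{\mathbf F^{(\ell)}_m}_{\mathrm F}^k=\mathcal{O}(m^{-k/2})$, so that $\E\inner{\mathbf L}{\mathbf R}_{\mathrm F}$ and $\E\norm{\mathbf R}_{\mathrm F}^2$ are dominated by products of $h$ with inverse sample sizes; the ``sufficiently large $\bn$'' hypothesis is precisely what lets these mixed contributions be absorbed into the $\mathcal{O}(h^2)$ bias coming from the deterministic log-linearization. I expect this remainder control---simultaneously invoking matrix-function perturbation estimates to justify the $C\norm{\mathbf X}_{\mathrm F}^2$ Taylor bound on the \emph{random} argument $\hat\bSigma^{(\ell)}_m$ and probabilistic concentration to keep that argument inside the ball where the bound holds---to be the main obstacle, whereas the variance computation of the preceding paragraph is essentially bookkeeping.
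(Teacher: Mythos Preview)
Your proposal is correct and follows essentially the same architecture as the paper's proof: linearize each matrix logarithm about $\eye$ via its Taylor series, recognize the linear part as exactly the Euclidean multi-fidelity error $\hat\bSigma^{\rm EMF}_{\bn}-\bSigma$, and show the quadratic-and-higher remainder is $\mathcal{O}(h^2)$ once the sample covariances concentrate in a ball of radius $\mathcal{O}(h)$ around $\eye$. Your computation of $\E\norm{\mathbf L}_{\mathrm F}^2$ (including the observation that cross terms between different surrogate levels cancel) is precisely the content of the paper's separate Corollary~\ref{prop:MSELCV}.

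The only methodological difference is in how the remainder is combined with the linear part. The paper bounds $\norm{\mathbf R}_{\mathrm F}\le Kh^2$ \emph{pointwise} (almost surely for large $\bn$) by summing geometric series after the LLN places each $\hat\bSigma^{(\ell)}_m$ within $3h/4$ of $\eye$, then applies the reverse triangle inequality and the difference-of-squares factorization $a^2-b^2=(a-b)(a+b)$ to pass from a bound on $\bigl|\norm{\cdot}-\norm{\cdot}\bigr|$ to a bound on $\bigl|\norm{\cdot}^2-\norm{\cdot}^2\bigr|$. You instead expand $\norm{\mathbf L+\mathbf R}_{\mathrm F}^2$ directly and control $\E\inner{\mathbf L}{\mathbf R}_{\mathrm F}$ and $\E\norm{\mathbf R}_{\mathrm F}^2$ via moment bounds on the fluctuations $\mathbf F^{(\ell)}_m$. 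Your route is a bit more work but arguably more transparent about \emph{why} large $\bn$ is needed (to make $h/n$ and $n^{-3/2}$ type terms dominated by $h^2$); the paper's route is shorter because the pointwise $Kh^2$ bound on $\mathbf R$ sidesteps any explicit cross-term estimate.
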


Note that the technical condition $\|\bSigma - \eye\|_F < h/4$ can always be established by re-scaling $\tilde{\by}^{(\ell)} = {\mathbf{C}}^{-1/2}\by^{(\ell)}$ with a matrix $\mathbf{C} \in \mathbb{S}^d_{++}$ and using $\mathbf{C}$ as the base of the matrix logarithm and exponential in the estimator \eqref{eq:TS}; we consider $\eye$ as the base here for ease of exposition.

Below we relate the MSE of the Euclidean estimator $\hat{\bSigma}_{\bn}^{\text{EMF}}$ \eqref{eq:LCV} in the Euclidean metric with the MSE of the log-Euclidean estimator \eqref{eq:TS} $\hat{\bSigma}_{\bn}^{\text{LEMF}}$ in the log-Euclidean metric:
\begin{corollary}
    The MSE in the Euclidean metric of the Euclidean multi-fidelity estimator~\eqref{eq:LCV} with sample size vector $\bn$ and coefficients $\balpha$ is
    \begin{multline}
        \mathbb{E}\left[d_{\mathrm{F}}\left(\hat{\bSigma}^{\mathrm{EMF}}_{\bn},\bSigma\right)^2\right] = \E \left[ \left\|\hat{\bSigma}^{\mathrm{EMF}}_{\bn} - \bSigma \right\|_{\rm F}^2 \right] =  \\
        \frac{\sigma_0^2}{n_0}
        +
        \sum_{\ell=1}^L  \left(\frac{1}{n_{\ell-1}} - \frac{1}{n_{\ell}}\right)(\alpha_{\ell}^2\sigma_{\ell}^2 - 2\alpha_{\ell}\rho_{\ell}\sigma_{\ell}\sigma_0)\, .
    \label{eq:FrobMSELCV}
    \end{multline}
\label{prop:MSELCV}
\end{corollary}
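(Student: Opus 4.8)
The plan is to use the fact that the EMF estimator \eqref{eq:LCV} is an \emph{exact} affine function of the level-wise sample covariance matrices, so that (unlike the log-Euclidean case) no linearization of a nonlinear map is needed and the computation is exact. First I would establish unbiasedness: under the zero-mean assumption of \cref{sec:3.3}, each $\hat{\bSigma}^{(\ell)}_{m}$ satisfies $\E[\hat{\bSigma}^{(\ell)}_m]=\bSigma^{(\ell)}$ for every $m$, so the control-variate increments have zero mean and $\E[\hat{\bSigma}^{\mathrm{EMF}}_{\bn}]=\bSigma^{(0)}=\bSigma$. The bias--variance decomposition then reduces the MSE to the trace of the covariance operator, $\E[\|\hat{\bSigma}^{\mathrm{EMF}}_{\bn}-\bSigma\|_{\mathrm F}^2]=\tr(\Cov[\hat{\bSigma}^{\mathrm{EMF}}_{\bn}])$, where I use the compatibility $\tr(\mathbf A\otimes\mathbf B)=\langle\mathbf A,\mathbf B\rangle_F$ to identify $\E[\|M-\E M\|_{\mathrm F}^2]=\tr\Cov[M]$.

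The key computational tool is a shared-event covariance identity. Writing $\bGamma_{\ell t}=\Cov[\by^{(\ell)}(\by^{(\ell)})^\top,\by^{(t)}(\by^{(t)})^\top]$ for the per-sample (cross-)covariance operators of \eqref{eq:autocov}--\eqref{eq:crosscov}, the facts that distinct events $\omega_i$ are i.i.d.\ while any pair of estimators shares exactly its first $\min(m_i,m_j)$ events yield $\Cov[\hat{\bSigma}^{(\ell)}_{m_i},\hat{\bSigma}^{(t)}_{m_j}]=\tfrac{\min(m_i,m_j)}{m_i m_j}\,\bGamma_{\ell t}$. I would then expand $\Cov[\hat{\bSigma}^{\mathrm{EMF}}_{\bn}]$ bilinearly over the $2L+1$ constituent terms of \eqref{eq:LCV}, grouping contributions into (i) the level-$0$ auto-term $\tfrac{1}{n_0}\bGamma_{00}$; (ii) the auto-covariance of each increment $\hat{\bSigma}^{(\ell)}_{n_\ell}-\hat{\bSigma}^{(\ell)}_{n_{\ell-1}}$; (iii) the cross-covariance of the level-$0$ term with each increment; and (iv) cross-covariances between distinct increments.

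Evaluating the minima using the ordering $n_0\le n_1\le\cdots\le n_L$ implied by $n_\ell\ge n_{\ell-1}$, group (ii) collapses to $(\tfrac{1}{n_{\ell-1}}-\tfrac{1}{n_\ell})\bGamma_{\ell\ell}$ and group (iii) to $(\tfrac{1}{n_\ell}-\tfrac{1}{n_{\ell-1}})\bGamma_{0\ell}$. Taking traces with $\tr\bGamma_{\ell\ell}=\sigma_\ell^2$ and $\tr\bGamma_{0\ell}=\rho_\ell\sigma_0\sigma_\ell$ and attaching the coefficients $\alpha_\ell$ (with a factor of two for the two ordered pairs in (iii)) reproduces the stated $\alpha_\ell^2\sigma_\ell^2-2\alpha_\ell\rho_\ell\sigma_\ell\sigma_0$ summand weighted by $\tfrac{1}{n_{\ell-1}}-\tfrac{1}{n_\ell}$. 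The main obstacle, and the one step worth doing carefully, is showing that group (iv) vanishes: for $\ell<\ell'$ the lower increment's event range is entirely contained in that of the higher one, so all four minima in the expansion of $\Cov[\hat{\bSigma}^{(\ell)}_{n_\ell}-\hat{\bSigma}^{(\ell)}_{n_{\ell-1}},\,\hat{\bSigma}^{(\ell')}_{n_{\ell'}}-\hat{\bSigma}^{(\ell')}_{n_{\ell'-1}}]$ produce the same pair of reciprocals $\tfrac{1}{n_{\ell'}},\tfrac{1}{n_{\ell'-1}}$ and cancel, leaving $\mathbf 0$; this is what eliminates all inter-level couplings from the final formula.

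Finally I would record the promised relationship to \cref{prop:MSETS}: expression \eqref{eq:FrobMSELCV} is precisely the leading-order term of \eqref{eq:LogEMSETS}. This is expected, since \eqref{eq:logTS} shows the log-Euclidean estimator carries the identical control-variate structure in the tangent space $\bbS^d$; the $\mathcal O(h^2)$ remainder in \cref{prop:MSETS} arises solely from linearizing $\Log$ about $\eye$ (valid under $\|\bSigma-\eye\|_{\mathrm F}<h/4$). Here there is no logarithm, the estimator is genuinely linear in the sample covariances, and the identity \eqref{eq:FrobMSELCV} therefore holds exactly rather than up to $\mathcal O(h^2)$.
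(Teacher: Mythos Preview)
Your proposal is correct and follows essentially the same approach as the paper: establish unbiasedness, reduce the MSE to $\tr\Cov[\hat{\bSigma}^{\mathrm{EMF}}_{\bn}]$, prove the shared-event covariance identity $\Cov[\hat{\bSigma}^{(\ell)}_{m_i},\hat{\bSigma}^{(t)}_{m_j}]=\tfrac{\min(m_i,m_j)}{m_i m_j}\bGamma_{\ell t}$ (which the paper writes equivalently as $1/\max\{m_i,m_j\}$ and frames as a matrix analogue of Lemma~3.2 in \citet{PWG16}), expand bilinearly, observe that the inter-level cross terms vanish, and take traces. Your grouping into (i)--(iv) and your explicit explanation of why group~(iv) cancels are in fact a bit cleaner than the paper's presentation, which writes out the full $2L+1$-term expansion before noting that the doubly-indexed sums are zero.
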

\Cref{prop:MSELCV} shows that the MSE of the LEMF estimator in the log-Euclidean metric is equal to the MSE of the EMF estimator in the Euclidean metric up to first order. 
Note that it does not comment on a cross-comparison (e.g., MSE of the LEMF in the Euclidean metric and vice-versa); we will explore such comparisons numerically in Section~\ref{sec:4}.

\subsection{Optimal Sample Allocation for the LEMF}
\Cref{prop:MSETS,prop:MSELCV} show the dependence of the MSE in the log-Euclidean and Euclidean metrics of the LEMF and EMF estimators on the sample allocations $\bn$ and coefficients $\balpha$. We now derive $\bn^*$ and $\balpha^*$ that minimize the respective MSEs up to first order and so achieve an optimal sample allocation across the hierarchy of data sources. 

We start with $\balpha^*$: because the first-order approximation of the MSE~in \eqref{eq:LogEMSETS}  and the MSE in \eqref{eq:FrobMSELCV} are both quadratic in the coefficients $\alpha_1, \dots, \alpha_L$, we can directly minimize for $\alpha_1, \dots, \alpha_L$ and obtain
\begin{equation}
        \balpha^{\star} =  [\alpha_1^{\star},\ldots,\alpha_L^{\star}] \, ,\quad \alpha_{\ell}^{\star} = \rho_{\ell}\frac{\sigma_0}{\sigma_{\ell}} \, ,
    \label{eq:OptimalCoef}
    \end{equation}
which are independent of the sample sizes.
    
In preparation for finding $\bn^*$, we define the costs of the multi-fidelity estimators as 
\begin{equation}
    c(\bn) = \sum\nolimits_{\ell = 0}^{L} n_{\ell} c_{\ell} \, .
    \label{eq:TotalCost}
    \end{equation}
    Following~\citet{PWG16}, \cref{thm:OptimalAllocation} derives the optimal sample allocation $\bn^*$ to obtain an estimator with first-order minimal MSE and costs $c(\bn^*) \leq B $ for a fixed computational budget $B > 0$.
    
{Note that in writing \eqref{eq:TotalCost}, we assume that the costs of generating \emph{samples} of the random variables $\by^{(0)}, \ldots, \by^{(L)}$ are much higher than the cost of constructing a multi-fidelity covariance estimate, and in particular eclipse the cost of computing the matrix exponential and the $2L + 1$ matrix logarithms required by the LEMF estimator \eqref{eq:TS}. This situation is often the case, e.g., in science and engineering applications when generating samples may correspond to evaluating computationally intensive physics-based models. One could explicitly account for the cost of constructing the LEMF estimator by inserting a modified budget of $\hat B = B - c_{\rm log/exp}$ into \cref{thm:OptimalAllocation}, where $c_{\rm log/exp}$ is the cost of the matrix logarithms and exponential, and subsequently obtain a result analogous to \cref{cor:MSEComparison} with a more complicated condition for when the LEMF estimator outperforms an equivalent-cost high-fidelity estimator. }

\begin{proposition}
Let $B > 0$ denote the computational budget and let \Cref{prop:MSETS} apply.  The first-order optimal sample allocation $\bn^{\star}$ for the LEMF estimator~\eqref{eq:LCV} that solves
\begin{equation}
\begin{split}
    \underset{\bn \in \mathbb{R}^{L + 1}}{\min}&\quad \mathbb{E}\left[ d_{\mathrm{LE}}\left(\hat{\bSigma}^{\mathrm{LEMF}}_{\bn},\bSigma\right)^2 \right] \\
    \text{ such that }&\quad c(\bfn) \le B \, 
\end{split}
\label{eq:OptimalAllocationProblem}
\end{equation}
is
\begin{equation}
    n_{\ell}^{\star} = {B \sqrt{ \frac{c_{0}(\rho_{\ell}^2 - \rho_{\ell+1}^2)}{c_{\ell}(1 - \rho_1^2)} }}  
    \Bigg/ { \sum_{i=0}^L c_i \sqrt{ \frac{c_0(\rho_{i}^2 - \rho_{i+1}^2)}{c_i(1 - \rho_1^2)} } }\,.\label{sec:LEMF:OptiSamp}
\end{equation}
\label{thm:OptimalAllocation}
\end{proposition}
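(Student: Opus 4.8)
The plan is to first reduce the objective to a clean, explicitly minimizable form and then solve a standard budget-constrained optimization. Since \Cref{prop:MSETS} already gives the first-order MSE as a quadratic in each $\alpha_\ell$, and \eqref{eq:OptimalCoef} supplies the minimizing coefficients $\alpha_\ell^\star = \rho_\ell \sigma_0/\sigma_\ell$, I would begin by substituting $\balpha^\star$ into \eqref{eq:LogEMSETS}. The per-level term $\alpha_\ell^2 \sigma_\ell^2 - 2\alpha_\ell \rho_\ell \sigma_\ell \sigma_0$ collapses to $-\rho_\ell^2 \sigma_0^2$, so that the leading-order MSE becomes
\[
\frac{\sigma_0^2}{n_0} - \sigma_0^2 \sum_{\ell=1}^L \rho_\ell^2\left(\frac{1}{n_{\ell-1}} - \frac{1}{n_\ell}\right).
\]
Regrouping the sum by the index of each $1/n_j$ and using the conventions $\rho_0 = 1$ and $\rho_{L+1} = 0$, the coefficient of $1/n_j$ becomes $\sigma_0^2(\rho_j^2 - \rho_{j+1}^2)$ for every $j = 0, \dots, L$, giving the separable objective $\sigma_0^2 \sum_{j=0}^L (\rho_j^2 - \rho_{j+1}^2)/n_j$. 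The ordering assumption $1 = \rho_0 > |\rho_1| > \cdots > |\rho_L| \geq \rho_{L+1} = 0$ guarantees each coefficient is nonnegative, so the objective is a nonnegative combination of the convex functions $n_j \mapsto 1/n_j$.

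Next I would minimize this objective over $\bn \in \R^{L+1}$ subject to $c(\bn) = \sum_{\ell=0}^L n_\ell c_\ell \le B$ from \eqref{eq:TotalCost}. Because the objective is strictly decreasing in each $n_j$, the budget constraint must be active at any optimum, so I would replace the inequality with the equality $\sum_\ell n_\ell c_\ell = B$ and form the Lagrangian with a single multiplier $\lambda$. The stationarity conditions $-\sigma_0^2(\rho_j^2 - \rho_{j+1}^2)/n_j^2 + \lambda c_j = 0$ yield $n_j \propto \sqrt{(\rho_j^2 - \rho_{j+1}^2)/c_j}$, and substituting this proportionality back into the budget equation fixes the constant of proportionality. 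Solving for $\lambda$ and normalizing produces
\[
n_j^\star = B\,\frac{\sqrt{(\rho_j^2-\rho_{j+1}^2)/c_j}}{\sum_{i=0}^L c_i\sqrt{(\rho_i^2-\rho_{i+1}^2)/c_i}},
\]
which is exactly \eqref{sec:LEMF:OptiSamp} after multiplying numerator and denominator by the level-independent constant $\sqrt{c_0/(1-\rho_1^2)}$ (and noting that the $\sigma_0^2$ factor cancels).

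Finally I would confirm that this KKT point is the global minimizer: the objective is convex on the positive orthant and the feasible set is a convex polytope, so the stationary point obtained from the Lagrange conditions is globally optimal. The main subtlety to handle carefully is the telescoping regrouping in the first step---verifying that the boundary terms at $j=0$ and $j=L$ indeed match the general formula via the conventions $\rho_0=1$, $\rho_{L+1}=0$---together with checking that the nonnegativity of the coefficients $\rho_j^2 - \rho_{j+1}^2$ (hence the well-definedness of the square roots) follows from the monotone ordering of the correlations. I would also note that the derivation is carried out over the continuous relaxation $\bn \in \R^{L+1}$, so in practice one rounds the $n_j^\star$ to integers, incurring only a higher-order perturbation of the MSE.
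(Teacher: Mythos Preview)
Your proposal is correct and follows essentially the same route as the paper. The paper's own proof simply observes that the first-order MSE in \Cref{prop:MSETS} and \Cref{prop:MSELCV} has exactly the form of the scalar multi-fidelity MSE in \citet{PWG16} (with $\sigma_\ell^2$ and $\rho_\ell$ replaced by their matrix-generalized versions) and invokes that reference for the optimal allocation; your derivation---substituting $\balpha^\star$, telescoping to the separable objective $\sigma_0^2\sum_j(\rho_j^2-\rho_{j+1}^2)/n_j$, and solving the resulting equality-constrained convex problem via Lagrange multipliers---is precisely the argument underlying that cited result.
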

Because the MSE of the Euclidean estimator (in $d_{\mathrm F}(\cdot, \cdot)$) is equal to the first-order approximation of the MSE of the log-Euclidean estimator (in $d_{\rm LE}(\cdot, \cdot)$), we directly obtain that \eqref{sec:LEMF:OptiSamp} is the optimal sample allocation for the Euclidean estimator as well. %
In practice, we round either up or down and so use either $\lceil \bn^{\star} \rceil$ or $\lfloor \bn^{\star} \rfloor$. %

If the correlation coefficients $\rho_l$ and costs $c_{\ell}, \ell = 0, \dots, L$ are known, one can determine the optimal $\balpha^{\star}$ and $\bn^{\star}$ for fixed budget $B$; or, alternatively, determine $\balpha^{\star}$ and $\bn^{\star}$ so that the MSE of the LEMF estimator is below a threshold $\epsilon > 0$. In situations when $\rho_\ell$ and $c_\ell$ are unknown, it is common %
to estimate them in pilot studies \citep{cliffe_multilevel_2011,PWG16} and even to reuse the pilot samples in the actual estimator \citep{KONRAD2022110898}.

\subsection{Discussion of the LEMF Estimator}
\textbf{Interpretation} \cref{thm:OptimalAllocation} shows that when the correlation coefficients~\eqref{eq:GeneralizedCorrelation} are close to 1, then the LEMF estimator~\eqref{eq:TS} allocates more of the budget to the cheaper surrogate samples. In the extreme case where $\rho_{\ell} = 0$, the entire budget is allocated to obtaining high-fidelity samples and one recovers the single-fidelity estimator~\eqref{eq:SampleCov} with $n = B/c_0$. 

The following corollary shows that the corrrelation coefficients $\rho_\ell$ and costs $c_\ell$ dictate whether the proposed LEMF estimator leads to lower MSE than the equivalent-cost single (high)-fidelity estimator. 
\begin{corollary}
Let \Cref{thm:OptimalAllocation} apply. Then, the first-order term of the MSE of the LEMF estimator with $\bn^*$ is
\begin{equation}
    \mathbb{E}\left[ d_{\mathrm{LE}}\left(\hat{\bSigma}^{\mathrm{LEMF}}_{\bn^{\star}},\bSigma\right)^2 \right] 
    \dot{=}
    \frac{\sigma_0^2}{ B} \left( \sum\limits_{\ell=0}^L  \sqrt{ c_{\ell}(\rho_{\ell}^2 - \rho_{\ell+1}^2) }\right)^2  \, ,\label{eq:FirstOrderMSELEMF}
\end{equation}
which will be smaller than the first-order log-Euclidean MSE of the cost-equivalent single-fidelity estimator~\eqref{eq:SampleCov} that uses $n_0 = B/c_0$ high-fidelity samples if and only if
\begin{equation}
    \sum\nolimits_{\ell=0}^L \sqrt{ \frac{c_{\ell}}{c_0}(\rho_{\ell}^2 - \rho_{\ell+1}^2) } < 1\, .\label{eq:WhenBetterBound}
\end{equation}
\label{cor:MSEComparison}
\end{corollary}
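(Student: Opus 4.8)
The plan is to specialize the first-order MSE of \cref{prop:MSETS} to the optimal coefficients \eqref{eq:OptimalCoef} and the optimal allocation of \cref{thm:OptimalAllocation}, collapse it into a single sum, and then compare the result against the equivalent-cost single-fidelity estimator.

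First I would insert the optimal coefficients $\alpha_\ell^\star = \rho_\ell\sigma_0/\sigma_\ell$ into the bracketed terms of \eqref{eq:LogEMSETS}. Each bracket simplifies as $(\alpha_\ell^\star)^2\sigma_\ell^2 - 2\alpha_\ell^\star\rho_\ell\sigma_\ell\sigma_0 = \rho_\ell^2\sigma_0^2 - 2\rho_\ell^2\sigma_0^2 = -\rho_\ell^2\sigma_0^2$, so the first-order MSE becomes $\sigma_0^2/n_0 - \sigma_0^2\sum_{\ell=1}^L(1/n_{\ell-1}-1/n_\ell)\rho_\ell^2$. I would then rearrange this by Abel summation, collecting the coefficient of each $1/n_\ell$ and invoking the boundary conventions $\rho_0=1$ and $\rho_{L+1}=0$, to obtain the compact form
\[
\E\!\left[d_{\mathrm{LE}}\!\left(\hat\bSigma_{\bn}^{\mathrm{LEMF}},\bSigma\right)^2\right] \;\dot=\; \sigma_0^2\sum_{\ell=0}^L \frac{\rho_\ell^2-\rho_{\ell+1}^2}{n_\ell}.
\]

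Next I would substitute the optimal allocation $\bn^\star$. Since $n_\ell^\star \propto \sqrt{(\rho_\ell^2-\rho_{\ell+1}^2)/c_\ell}$ with the proportionality constant fixed by the constraint $c(\bn^\star)=B$, plugging into the compact sum and simplifying produces \eqref{eq:FirstOrderMSELEMF}. As a cleaner route, I would instead observe that minimizing $\sum_\ell a_\ell/n_\ell$ subject to $\sum_\ell c_\ell n_\ell = B$ has optimal value $\left(\sum_\ell\sqrt{a_\ell c_\ell}\right)^2\!/B$ by Cauchy--Schwarz, attained precisely at $n_\ell\propto\sqrt{a_\ell/c_\ell}$, i.e.\ at $\bn^\star$; taking $a_\ell=\sigma_0^2(\rho_\ell^2-\rho_{\ell+1}^2)$ then yields \eqref{eq:FirstOrderMSELEMF} directly, with the equality case confirming that $\bn^\star$ is indeed the minimizer.

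Finally, for the comparison I would evaluate the first-order MSE of the single-fidelity estimator \eqref{eq:SampleCov}, which from \cref{prop:MSETS} with no surrogate terms equals $\sigma_0^2/n_0$; with $n_0=B/c_0$ this is $\sigma_0^2 c_0/B$. The inequality $\sigma_0^2 B^{-1}\left(\sum_\ell\sqrt{c_\ell(\rho_\ell^2-\rho_{\ell+1}^2)}\right)^2 < \sigma_0^2 c_0/B$ then holds if and only if, after dividing by $\sigma_0^2/B$, taking square roots (both sides nonnegative), and dividing through by $\sqrt{c_0}$, the condition \eqref{eq:WhenBetterBound} is satisfied. I expect the only non-routine step to be the Abel-summation collapse into the single sum $\sum_\ell(\rho_\ell^2-\rho_{\ell+1}^2)/n_\ell$, together with verifying that the Cauchy--Schwarz equality case coincides with $\bn^\star$; the remaining manipulations are elementary algebra.
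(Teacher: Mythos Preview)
Your proposal is correct. The paper, by contrast, does not carry out the computation at all: it simply observes that the first-order MSE has the same algebraic form as the scalar multi-fidelity estimator analyzed in \citet{PWG16}, so the optimal-allocation MSE and the comparison condition \eqref{eq:WhenBetterBound} transfer verbatim once the generalized $\sigma_\ell$ and $\rho_\ell$ are substituted. Your route is genuinely different in that it is self-contained: the Abel-summation step to reach $\sigma_0^2\sum_\ell(\rho_\ell^2-\rho_{\ell+1}^2)/n_\ell$ followed by the Cauchy--Schwarz argument gives both the optimal value \eqref{eq:FirstOrderMSELEMF} and an independent confirmation that $\bn^\star$ is the minimizer, without appealing to the external reference. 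The paper's approach is shorter but outsources the substance; yours makes the mechanism transparent and would stand on its own even without access to \citet{PWG16}.
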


\cref{eq:WhenBetterBound} demonstrates that for fixed $c_0$, the benefit of using the LEMF estimator \eqref{eq:TS} over the equivalent-cost high-fidelity estimator \eqref{eq:SampleCov} is a function of both the surrogate model correlations with the high-fidelity model, $\rho_1, \dots, \rho_L$, and the surrogate model costs, $c_1, \dots, c_L$. It is particularly instructive to consider the bifidelity case with only one surrogate model ($L = 1$). In this setting, condition \eqref{eq:WhenBetterBound} simplifies to 
\begin{equation}
\sqrt{1 - \rho_1^2} + \sqrt{\frac{c_1}{c_0}\rho_1^2} < 1 \iff 2\sqrt{\frac{1 - \rho_1^2}{\rho_1^2}} < \frac{c_0 - c_1}{\sqrt{c_0c_1}}.
\label{eq:WhenBetterBound_bifidelity}
\end{equation}
If the normalized reduction in cost $\frac{c_0 - c_1}{\sqrt{c_0c_1}}$ achieved by the low-fidelity model is large, then the minimum required generalized correlation $\rho_1$ such that condition \eqref{eq:WhenBetterBound_bifidelity} is satisfied decreases. Conversely, in the limit $\rho_1 \to 1$ we only require $c_1 < c_0$ in order to see a benefit from using the LEMF estimator. In \cref{fig:cost_condition} we visually demonstrate this relationship by plotting the region in $(\rho, c_1/c_0)$ space for which \eqref{eq:WhenBetterBound_bifidelity} holds and the LEMF estimator has a lower first-order MSE than the equivalent-cost high-fidelity estimator.

\begin{figure}[h]
    \centering
    \includegraphics[width=0.9\linewidth]{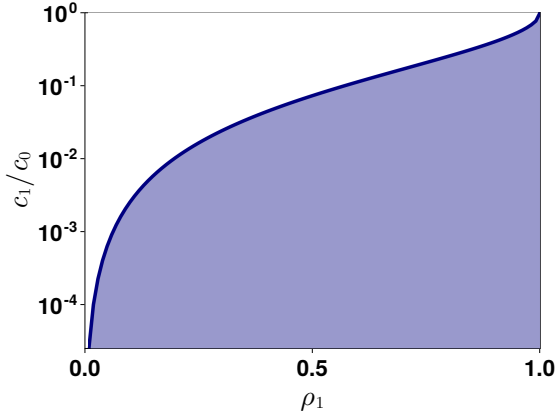}
    \caption{The shaded region corresponds to pairs $(\rho_1, c_1/c_0)$ of correlation coefficient and cost ratio for which \eqref{eq:WhenBetterBound_bifidelity} holds and thus the LEMF estimator has lower first-order error than the equivalent-cost single-fidelity estimator. Note that a lower correlation $\rho_1$ requires a smaller cost ratio $c_1/c_0$ for the LEMF estimator to achieve lower error than the single-fidelity estimator.}
    \label{fig:cost_condition}
\end{figure}
    
\textbf{Truncated multi-fidelity estimator} In addition to the EMF estimator and equivalent-cost single-fidelity estimators, we will compare the proposed LEMF estimator to the EMF estimator with positive definiteness enforced via eigenvalue truncation. %
Following \citet{HoelEtAl2016}, we define the truncated multi-fidelity covariance estimator as
    \begin{equation}
        \hat{\bSigma}^{+\delta}_{\bn} = \mathcal{T}_{\delta}\left(\hat{\bSigma}^{\mathrm{EMF}}_{\bn} \right) \, ,
    \label{eq:TruncEig}
    \end{equation}
where $\delta > 0$ is a small threshold and the truncation operator $\mathcal{T}_{\delta}$ acts on symmetric matrices via
$    \mathcal{T}_{\delta}( {\mathbf{A}} ) = {\mathbf{Q}} \left( {\boldsymbol \Lambda} \vee \delta \right){\mathbf{Q}}^{\top}$
with ${\mathbf{A}} = {\mathbf{Q}} {\boldsymbol \Lambda} {\mathbf{Q}}^{\top}$ denoting the eigen-decomposition of $\bf A$ and ${\boldsymbol \Lambda} \vee \delta$ denoting the diagonal matrix whose diagonal entries are $\max(\lambda_{ii}, \delta)$; see the Introduction for a discussion about drawbacks of the truncated estimator. %

\begin{table*}[!ht]
\centering
\begin{tabular}{|c|r|r|r|r|r|}
\hline
MSE & high-fidelity only & surrogate only & Euclidean MF & truncated MF ($\delta = 10^{-16}$) & LEMF (ours) \\
\hline
log-Euclidean & 1.72 & 5.82 &NaN & 66.88 & 0.83 \\
affine-invariant & 4.99 & 5.82 &NaN & 125.11 & 2.69 \\
Euclidean & 3.00 & 4.00 & 0.51 & 0.51 & 0.71 \\
\hline
\end{tabular}
\caption{Motivating example: The LEMF estimator leverages cheap surrogate samples to reduce the MSE compared to cost-equivalent single-fidelity estimators and other multi-fidelity estimators that rely on post-processing via truncation. } %
\label{table:ToyGaussianMSE}
\end{table*}

\begin{figure*}[!ht]
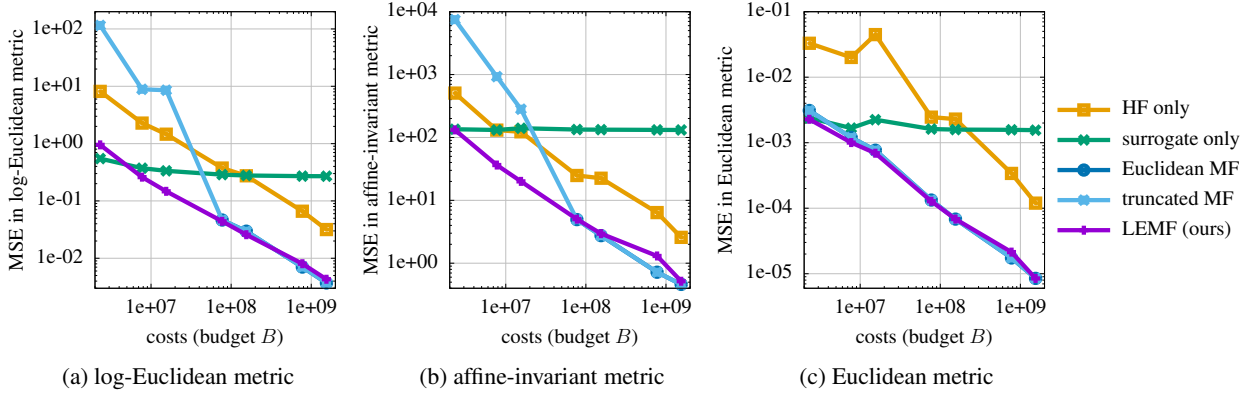

\begin{center}
\setlength\tabcolsep{1pt}
\begin{tabular}{cccc}
\resizebox{0.3\textwidth}{!}{\Large \input{figures/fwduq_heat_logE_mse.tex}}
&
\hspace*{-0.5cm}\resizebox{0.3\textwidth}{!}{\Large \input{figures/fwduq_heat_aff_mse.tex}}
&
\hspace*{-0.5cm}\resizebox{0.3\textwidth}{!}{\Large \input{figures/fwduq_heat_frob_mse.tex}}
&
\hspace*{-1.1cm}\resizebox{0.4\textwidth}{!}{\Large \input{figures/fwd_uq_mse_key.tex}}\\
 \small (a) log-Euclidean metric &  \hspace*{-0.25cm}\small (b) affine-invariant metric &  \hspace*{-0.25cm}\small (c) Euclidean metric  &
\end{tabular}
\end{center}
\caption{Heat flow: In the log-Euclidean metric, our LEMF estimator achieves the MSE tolerance below $\approx 10^{-1}$ with $\approx 30\times$ speedup compared to the single-fidelity estimator that uses high-fidelity samples alone. Only using surrogate samples leads to a large bias that prevents reaching an MSE below $10^{-1}$. The Euclidean multi-fidelity estimator is indefinite in more than 10\% of the 100 trials used here and therefore does not provide a valid covariance matrix in this example. (Plot with min/max over 100 trials is shown in Appendix~\ref{appx:UQ}.)}
\label{fig:fwd_uq_mse}
\end{figure*}
\begin{figure*}[!t]
\begin{tabular}{ccc}
\hspace*{-0.4cm}\resizebox{0.35\textwidth}{!}{\LARGE \input{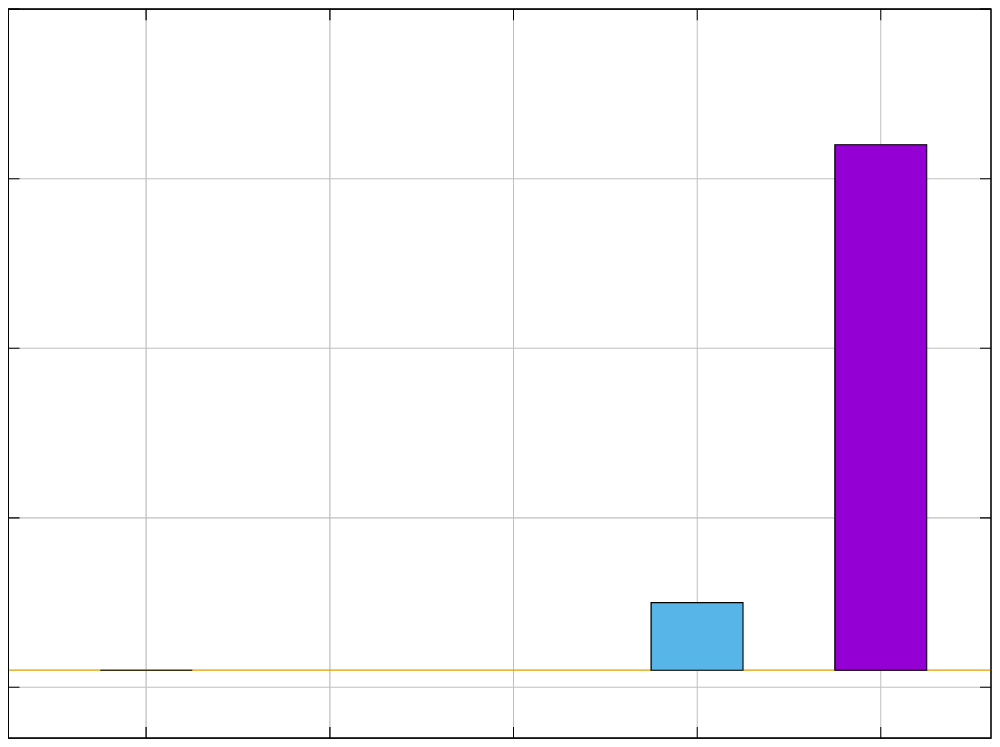}}
&
\hspace*{-0.75cm}\resizebox{0.35\textwidth}{!}{\LARGE 
\input{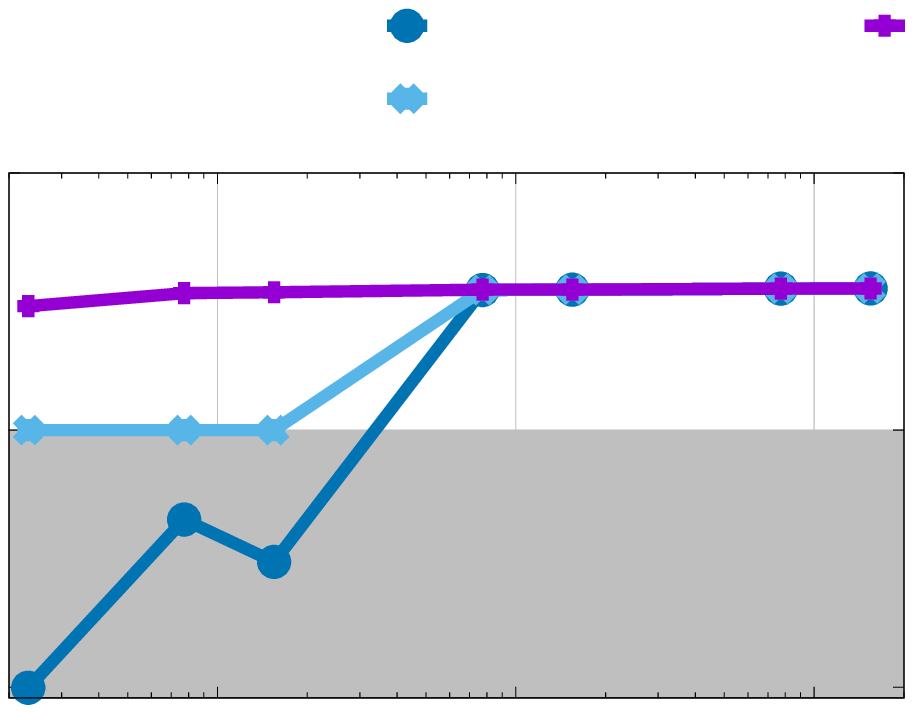}}
&
\hspace*{-0.75cm}\resizebox{0.35\textwidth}{!}{\LARGE
\input{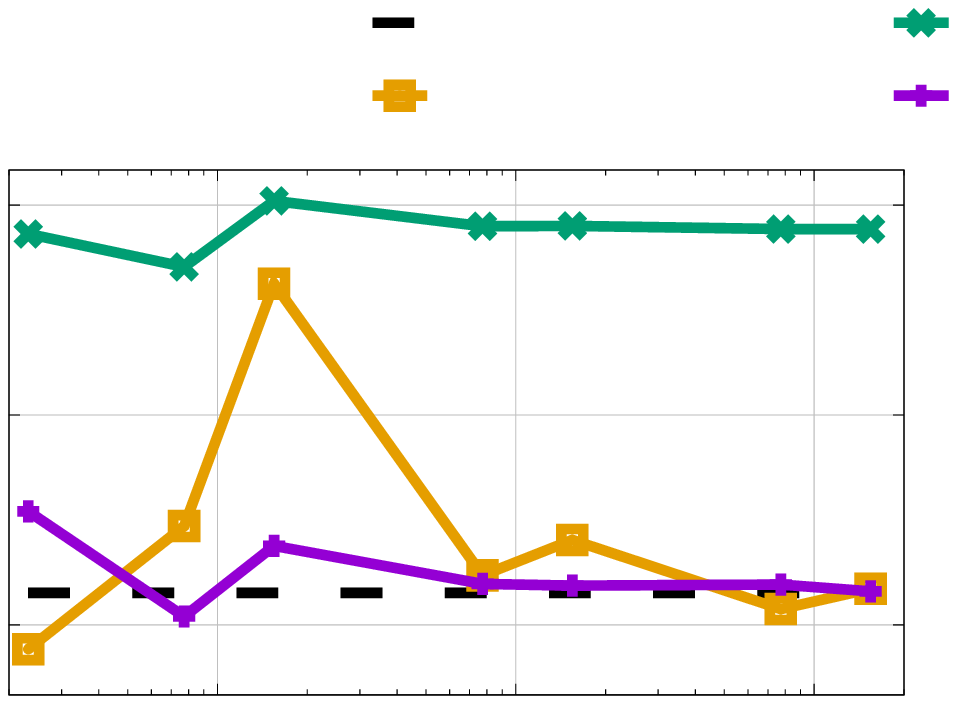}}\\
\small (a) speedup to reach MSE below $\approx 10^{-1}$ & \hspace*{-0.4cm}\small (b) smallest eigenvalue & \hspace*{-0.4cm}\small (c) convergence of  estimators
\end{tabular}
\caption{Heat flow: Plot (a): Our LEMF estimator achieves a $32\times$ speedup. Plot (b): Our LEMF estimator remains positive definite even for small sample sizes, whereas the Euclidean estimator becomes indefinite. Plot (c): Entries of our LEMF estimates converge quickly to the entries of the true covariance, whereas surrogate-only estimators incur a large bias and high-fidelity-only estimators incur high costs.} %
\label{fig:fwd_uq_other}
\end{figure*}

\section{Numerical Examples}
\label{sec:4}
\subsection{Motivating Example with Gaussians}\label{sec:NumExp:Gaussian}
Consider the problem of estimating the covariance of a Gaussian random variable $\by \sim N(\zeros, \bSigma)$, with $\bSigma \in \mathbb{S}^4_{++}$. For this motivating example, we obtain low-fidelity samples $\by^{(\ell)} = \by + \beps^{(\ell)}$ by corrupting a high-fidelity sample $\by$ with independent noise at different noise levels $\beps^{(\ell)} \sim N(\zeros, \bGamma^{(\ell)})$ for $\ell=1,2,3$. The matrices $\bGamma^{(1)}, \dots, \bGamma^{(3)}$ are chosen such that noise levels lead to generalized correlations of $\rho_1 \approx 0.93$, $\rho_2 \approx 0.74$, and $\rho_3 \approx 0.58$; see Appendix~\ref{appx:ToyGaussian}. We impose unit costs for drawing a high-fidelity sample $\by$ and costs $c_1 = 10^{-2},c_2 = 10^{-3}$, and $c_3 = 10^{-4}$ for drawing the surrogate samples. Because the samples are Gaussian, their empirical second moments follow a Wishart distribution and the generalized variances~\eqref{eq:GeneralizedVariance} and correlations~\eqref{eq:GeneralizedCorrelation} can be computed exactly; see Appendix~\ref{appx:ToyGaussian}. Table~\ref{table:ToyGaussianMSE} shows the MSE of the multi-fidelity estimators compared with that of single-fidelity covariance estimators using high-fidelity or surrogate samples alone with the same budget $B = 15$. The LEMF estimator has $2\times$ lower MSE than the other estimators in the log-Euclidean and affine-invariant metric, while being competitive in the Euclidean metric.

\subsection{Uncertainty Quantification of Steady-State Heat Flow}
\textbf{Physics model} The steady-state heat equation over the domain $\mathcal{X} = (0,1)$ with a variable heat conductivity is
    \begin{equation}
        -\operatorname{div}\left( \exp(\kappa(x; \btheta)) \nabla u(x; \btheta)  \right) = f(x) ,\quad x \in \mathcal{X}\,,
    \label{eq:heat_pde}
    \end{equation}
    where we impose Dirichlet boundary conditions $u(0,\btheta)=0,u(1,\btheta) = 1$ and constant source $f(x) = 1$. We model the log heat conductivity $\kappa$ as a degree-$4$ sine-polynomial $\kappa(x; \theta) = \sum_{k=1}^4 \theta_k \sin(2\pi k x)$, whose coefficients are given by the components of $\btheta = [\theta_1,\ldots,\theta_4]^{\top} \in \mathbb{R}^4$. %
    
    We obtain a high-fidelity observation model $G^{(0)}: \R^4 \to \R^{10}$ by approximating the solution $u(\cdot, \btheta)$ via a second-order finite difference scheme with 65,536 grid points and observing the resulting temperature at $d=10$ equally spaced points $x_i = i/11$, $i\in \{1,\ldots,10\}$, in the interior of the domain $\mathcal{X}$ (see Figure~\ref{fig:HeatSol} in Appendix~\ref{appx:UQ}). The surrogate model $G^{(1)}$ is obtained by approximating the solution $u(\cdot, \btheta)$ with only 1,024 grid points and performing the same measurement process. We identify the costs of the high-fidelity and surrogate models %
    with the number of grid points involved, $c_0 = 2^{16}$ and $c_1 = 2^{10}$. %
    
    \textbf{Covariance estimation} We model $\btheta \sim N(\zeros, \eye_{4\times 4})$ and define high-fidelity samples as $\by^{(0)} = G^{(0)}(\btheta)$ and surrogate samples as $\by^{(1)} = G^{(1)}(\btheta)$. Our aim is to estimate the covariance of $\by^{(0)}$, which is a $10 \times 10$ matrix. We first estimate the generalized variances and correlations $\sigma_0,\sigma_1,\rho_1$ in a pilot study %
    to compute the optimal multi-fidelity sample allocations $\bn^{\star} = [n_{\star}^{(0)}, n_{\star}^{(1)}]$ and $\balpha^{\star}$ for seven values of budget $B$ in the interval $[10^6, 2 \times 10^9]$; see Appendix~\ref{appx:UQ}.  We then obtain $n_{\star}^{(1)} \geq n_{\star}^{(0)}$ realizations $\btheta_1, \dots, \btheta_{n_{\star}^{(1)}}$ of the input random variable $\btheta$ and compute the $n_{\star}^{(0)}$ high-fidelity samples $\{\by^{(0)}(\btheta_i)\}_{i = 1}^{n_{\star}^{(0)}}$ and $n_{\star}^{(1)}$ surrogate samples $\{\by^{(1)}(\btheta_i)\}_{i = 1}^{n_{\star}^{(1)}}$. Note that the first $n_{\star}^{(0)}$ surrogate samples use the same realizations of $\btheta$ as the high-fidelity samples, creating statistical coupling, and thus correlation, between the high-fidelity and surrogate samples.
    
    \textbf{Results} Figure~\ref{fig:fwd_uq_mse} compares the MSE in the log-Euclidean, affine-invariant, and Euclidean metrics \eqref{eq:AIMetric} of the single-fidelity and multi-fidelity estimators over 100 trials. Using the surrogate samples alone leads to estimates with large bias with respect to the true (high-fidelity) covariance. For example, the bias of the surrogate-only estimator can be seen in Figure~\ref{fig:fwd_uq_other}c, which displays the convergence of each estimate of $\bSigma_{2,5}$ to the truth; see also Appendix~\ref{appx:UQ}. 
    
    The EMF estimator is unbiased and achieves competitively low MSE in the Euclidean metric but becomes indefinite in more than 10\% of trials (see Appendix~\ref{appx:UQ}). For this reason its MSE in the log-Euclidean and affine-invariant metrics is infinite/nonexistent (see Figure~\ref{fig:fwd_uq_other}b), as the log-Euclidean and affine-invariant distances between an element of $\bbS^d_{++}$ and any non positive-definite matrix are $\infty$. When we ``fix'' the EMF estimator via eigenvalue truncation (using a threshold of $\delta = 10^{-16}$ in \eqref{eq:TruncEig}), obtaining the positive-definite truncated EMF estimator, we introduce the possibility of large MSE in the log-Euclidean and affine-invariant metrics because we have modify the spectrum, which can be seen well for small budgets in Figure~\ref{fig:fwd_uq_mse}.
    
    By contrast, the proposed log-Euclidean multi-fidelity estimator uses surrogate samples in conjunction with high-fidelity samples and inherently retains positive definiteness, thus achieving about one order of magnitude speedup across metrics as compared to the estimator that uses the high-fidelity samples alone; Figure~\ref{fig:fwd_uq_other}a. The MSE of the LEMF estimator decays as $\mathcal{O}(1/B)$, in consort with the other Monte Carlo estimators which use high-fidelity samples. 

\begin{figure}[t]
\centering\includegraphics[width=0.90\columnwidth]{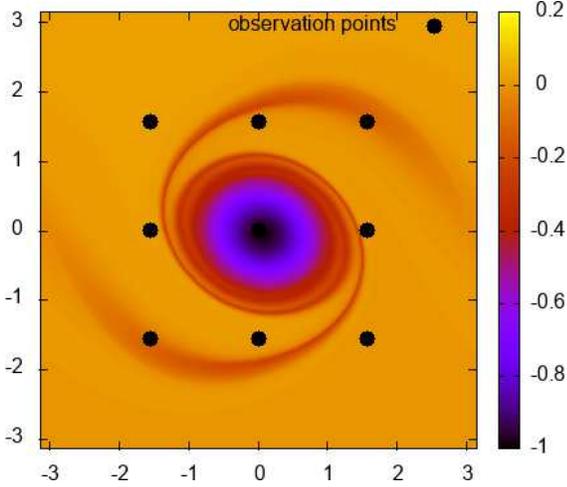}
\caption{Metric learning: An example of the final-time buoyancy for class $i = 1$. Our measurements consist of buoyancy values at nine spatial locations in the domain, indicated by black dots in the plot above. We use the observations to estimate a metric which distinguishes between observations corresponding to $\boldsymbol \theta$ sampled from class $i = 0$ and from class $i = 1$.}
\label{fig:SQG:Solution}
\end{figure}
    
\begin{figure}[t]
\centering
\resizebox{0.95\columnwidth}{!}{\large\input{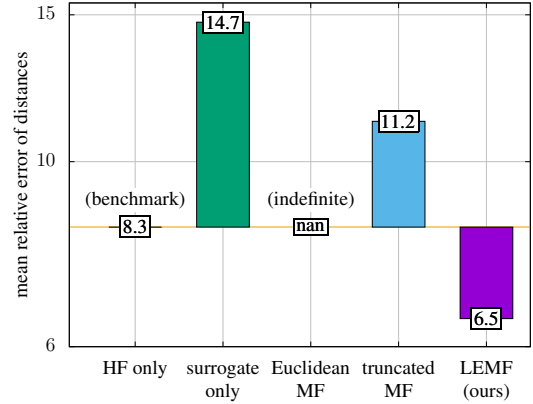}}
\caption{Metric learning: When using the learned metric to compute distances between observations of the surface quasi-geostrophic model, our LEMF estimator achieves a $>20\%$ lower error in the distances than using high-fidelity samples alone. The Euclidean estimator provides an indefinite metric in this example.}
\label{fig:SQG_metric_relerr_dist}
\end{figure}
    
\begin{figure*}[!t]
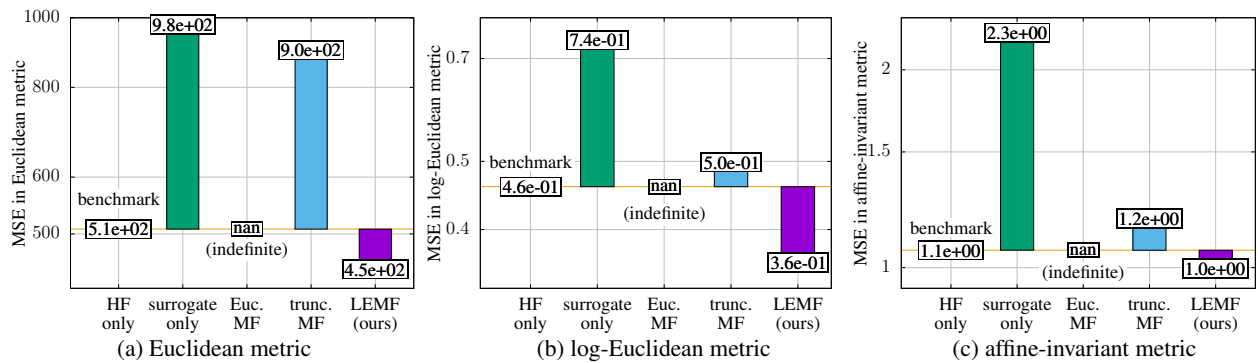

\centering
\begin{tabular}{ccc}
\resizebox{0.36\textwidth}{!}{\Large \input{figures/sqg_metric_mse_frob.tex}}
&
\hspace*{-1cm}\resizebox{0.36\textwidth}{!}{\Large \input{figures/sqg_metric_mse_logE.tex}}
&
\hspace*{-1cm}\resizebox{0.36\textwidth}{!}{\Large \input{figures/sqg_metric_mse_aff.tex}}\\
\small (a) Euclidean metric &  \hspace*{-0.5cm}\small (b) log-Euclidean metric &  \hspace*{-0.5cm}\small (c) affine-invariant metric
\end{tabular}
\caption{Metric learning: Our LEMF estimator achieves the lowest MSE compared to the cost-equivalent benchmark estimator that uses high-fidelity samples alone. The Euclidean multi-fidelity estimator is indefinite over 100 trials and therefore fails to learn a valid metric.}
\label{fig:sqg_metric}
\end{figure*}

\subsection{Metric Learning: Surface Quasi-Geostrophy}
We now apply our multifidelity covariance estimators in an intermediate step in the metric learning of \citet{Zadeh2016} to enable easier classification of fluid regimes from %
buoyancy measurements given by a surface quasi-geostrophic model \citep{HeldEtAl1985}.

\textbf{Physics model}
In the model presented in \citet{HeldEtAl1985}, the evolution of the buoyancy $b(\bx,t)$ over a periodic spatial domain $\mathcal{X} = [-\pi,\pi]\times[-\pi,\pi]$ is governed by %
    \begin{equation}
        \partial_t b(\bx, t; \btheta) + J(\psi, b) = 0 \, ,
    \label{eq:SQG}
    \end{equation}
    where $\psi$ is the streamfunction and $\bx=(x,y)$. The initial buoyancy is
    $
        b_0(\bx;\btheta) = -\frac{1}{(2\pi/ |\theta_5|)^2} \exp\left( -x^2 - \exp(2\theta_1) y^2 \right),
    $
the contours of which form ellipses parametrized by $\btheta = [\theta_1,\ldots,\theta_5]^{\top} \in \mathbb{R}^5$. 
$\theta_1$ is the log-aspect ratio and controls the origination of vortices in the buoyancy over time, and $\theta_5$ controls the amplitude of $b_0$. The parameters $\theta_2$, $\theta_3$, and $\theta_4$ determine additional aspects of the dynamics \eqref{eq:SQG}; see \cref{appx:SQG} for details. We sample the parameters $\btheta \in \mathbb{R}^5$ from a two-component Gaussian mixture whose means differ only in the coordinate corresponding to the log-aspect ratio; see Appendix~\ref{appx:SQG}. We use $i \in \{0,1\}$ to denote the mixture component from which a given realization of $\btheta$ is sampled. 

The high-fidelity parameter-to-observable map $G^{(0)}$ maps $\btheta$ onto $d = 9$ observations of the numerical solution of  equation~\eqref{eq:SQG} in the spatial domain; see Figure~\ref{fig:SQG:Solution} as well as Figure~\ref{fig:SQG_solution} in Appendix~\ref{appx:SQG}. For $G^{(0)}$, numerical solution of \eqref{eq:SQG} is computed up to time $T=24$ using finite differences with $256$ grid points along each coordinate dimension and with a time-step of $\Delta t = 0.005$ \footnote{We use the Python package \texttt{pyqg} to compute our solutions; see http://pyqg.readthedocs.org/}. High-fidelity samples $\by^{(0)}$ are thus given by $\by^{(0)} = G^{(0)}(\btheta)$ and have covariance $\bSigma \in \bbS^9_{++}$. 

The surrogate parameter-to-observable map $G^{(1)}$ and random variable $\by^{(1)}$ are defined in the same way except that in solving \cref{eq:SQG} we only use $64$ grid points along each coordinate dimension. Evaluating $G^{(1)}$ is $16$ times faster than evaluating $G^{(0)}$, with $c_0 = 256^2$ and $c_1 = 64^2$.

\textbf{Metric learning}
We now apply our covariance estimator~\eqref{eq:TS} to learn a metric on the data manifold. %
We use a slight modification of the geometric mean metric learning approach introduced in~\citet{Zadeh2016}, which learns an SPD matrix ${\mathbf{A}}$ to define the metric
    \[
        d_{{\mathbf{A}}}(\by_1, \by_2) = \sqrt{ (\by_1 - \by_2)^{\top} {\mathbf{A}} (\by_1 - \by_2) } \, . 
    \]
The matrix ${\mathbf{A}}$ is an interpolation in the affine-invariant geometry for $\bbS^d_{++}$ between the inverse of a similarity matrix
    $
        {\mathbf{S}} = \cov[\by \mid i = 0] + \cov[\by \mid i = 1] 
    $
    and a dissimilarity matrix $
        {\mathbf{D}} = {\mathbf{S}} + {\boldsymbol \mu} {\boldsymbol \mu}^{\top}
    $,
    where
    $
        {\boldsymbol \mu} = \mathbb{E}[\by \mid i = 0] - \mathbb{E}[\by \mid i = 1] \, .
    $
Following \citet{Zadeh2016}, this interpolation is given by
    \[
        {\mathbf{A}} = {\mathbf{S}}^{-1/2}({\mathbf{S}}^{1/2} {\mathbf{D}} {\mathbf{S}}^{1/2})^t {\mathbf{S}}^{-1/2}\, ,\quad t \in [0,1] \, .
    \]
We apply our multifidelity techniques to estimating the two covariance matrices that comprise $\bf S$, $\Cov[\by \mid i = 0]$ and $\Cov[\by \mid i = 1]$. In constructing $\bf A$
we set $t = 0.1$, motivated by \citep[Figure~3]{Zadeh2016}.

\textbf{Results}
Prior to applying our multi-fidelity estimators in this setting, we take a total of $24,000$ pilot samples between $(\by^{(0)}, \by^{(1)}) \mid i = 0$ and $(\by^{(0)}, \by^{(1)}) \mid i = 1$ in order to estimate the generalized variances $\sigma_0^2$ and $\sigma_1^2$ and correlation $\rho_1$ in both classes. We additionally use these pilot samples to obtain a reference estimate of $\bf A$, which we denote $\mathbf{A}_0$. %

We estimate $\Cov[\by^{(0)} \,|\, i = 0]$ and $\Cov[\by^{(0)} \,|\, i = 1]$ as follows: 
{we construct multifidelity estimators from a combination of 15 high-fidelity samples and an additional number $n_{\mathrm{lo}, i}$ of low-fidelity samples computed according to the optimal ratio in \cref{thm:OptimalAllocation}. We correspondingly construct equivalent-cost single-fidelity estimators with budgets
\[
B_i = 15c_{\rm hi} + n_{\mathrm{lo}, i}c_{\rm lo}, \quad i \in \{0,1\},
\]
spent entirely on high- or low-fidelity samples. We estimate $\Cov[\by \mid i = 0]$ and $\Cov[\by \mid i = 1]$ using the EMF, LEMF, and truncated EMF ($\delta = 10^{-16}$) estimators and the high-fidelity- and low-fidelity-only estimators with the sample allocations described above.
}
The two covariance estimates are combined to obtain estimates of the metric matrix $\bf A$.  

We first study the performance of the learned metrics relative to the reference metric. Let $\{\by^{(0)}_j\}_{j=1}^{5000} \in \mathbb{R}^9$ be a test set of 5,000 points and consider the mean relative error of distances $\mathrm{MRE}({\mathbf{A}})$ with respect to the reference metric ${\mathbf{A}}_0$; see Appendix~\ref{appx:SQG} for a definition. 
Figure~\ref{fig:SQG_metric_relerr_dist} shows the MRE for each estimator averaged over 50 independent trials. The LEMF estimator outperforms the single-fidelity estimator and the truncated MF estimator~\eqref{eq:TruncEig} by more than 20\%. The Euclidean multi-fidelity estimator gives indefinite estimates of $\bf A$, which cannot constitute valid metrics. %

We further compare the MSEs of the learned metric matrices %
relative to the reference ${\mathbf{A}}_0$. Figure~\ref{fig:sqg_metric} shows that the LEMF estimator~\eqref{eq:TS} yields a closer approximation of the reference metric; in particular the LEMF estimator gives a more accurate estimate of $\bf A$ relative to the high-fidelity-only estimator while the other estimators give less accurate estimates. The surrogate-only estimator is limited by its large bias. The EMF estimator~\eqref{eq:LCV} is again indefinite in $>$10\% of trials. %

\section{Conclusions}
We showed that formulating multi-fidelity estimation in a non-Euclidean geometry can be beneficial for enforcing structure: The proposed LEMF estimator leverages the log-Euclidean geometry for SPD matrices to guarantee that the resulting covariance matrix estimates are positive definite. This property is in contrast to state-of-the-art multi-fidelity estimators that can lose definiteness, especially in the small-sample regime, which is typical in science and engineering applications in which drawing samples is expensive. %
Our LEMF estimator preserves definiteness and leverages surrogates to achieve speedups of more than one order of magnitude in our experiments. The experiments further show that preserving definiteness is key for enabling tasks such as metric learning and classification, for which indefinite Euclidean multi-fidelity estimators are invalid. Future work includes combining the multi-fidelity approach with, e.g., shrinkage and other regularization schemes for high-dimensional covariance estimation. 

\section*{Acknowledgements}
AM and YM acknowledge support from the Office of Naval Research, SIMDA (Sea Ice Modeling and Data Assimilation) MURI, award number N00014-20-1-2595 (Dr.~Reza Malek-Madani and Dr.~Scott Harper). AM additionally acknowledges support from the NSF Graduate Research Fellowship under Grant No.\ 1745302. TA and BP acknowledge support from the AFOSR under Award Number FA9550-21-1-0222 (Dr.~Fariba Fahroo).

\section*{Code Availability} Python implementations of the LEMF, EMF, and truncated EMF estimators are available at \href{https://github.com/amaurais/LEMF}{https://github.com/amaurais/LEMF}.

\bibliographystyle{icml/icml2023.bst}
\bibliography{references}

\clearpage
\appendix
\section{Connection to Fréchet Averaging}
\label{app:frechet}
In addition to being a control-variate estimator in the log-Euclidean geometry, \cref{eq:TS} can also be viewed as a Fréchet average. Recall that the LEMF estimator takes the form
\begin{multline*}
\Log \hat\bSigma_{\bf n}^{\rm LEMF} =
     \Log \hat{\bSigma}_{n_0}^{(0)}  \\ + \sum_{\ell=1}^L \alpha_{\ell} \left( \Log \hat{\bSigma}^{(\ell)}_{n_{\ell}} - \Log \hat{\bSigma}^{(\ell)}_{n_{\ell-1}} \right).
\end{multline*}
Defining the SPD ``difference'' matrices 
\begin{multline*}
\mathbf{D}_\ell = \Exp(\Log \hat{\bSigma}^{(\ell)}_{n_{\ell}} - \Log \hat{\bSigma}^{(\ell)}_{n_{\ell-1}}) =  \hat{\bSigma}^{(\ell)}_{n_{\ell}} \ominus  \hat{\bSigma}^{(\ell)}_{n_{\ell-1}}, \\ \ell = 1, \dots, L
\end{multline*}
we equivalently have 
\begin{equation} 
\Log \hat\bSigma_{\bf n}^{\rm LEMF} =
     \Log \hat{\bSigma}_{n_0}^{(0)} + \sum_{\ell=1}^L \alpha_{\ell} \Log \mathbf{D}_\ell. 
     \label{eq:le_barycenter}
\end{equation}
Assuming that $\alpha_1, \dots, \alpha_L > 0$ \footnote{If $\alpha_\ell < 0$ for some $\ell$, we can define $\beta_\ell = -\alpha_\ell$ and redefine $\mathbf{D}_\ell = \hat{\bSigma}^{(\ell)}_{n_{\ell-1}} \ominus \hat{\bSigma}^{(\ell)}_{n_{\ell}}$ by reversing the order of subtraction in order to ensure that the Fréchet weights are all positive.}, we see in \eqref{eq:le_barycenter} by definition of the Fréchet mean in the log-Euclidean geometry \cite{ArsignyEtAl2006} that $\hat\bSigma^{\rm LEMF}_{\bf n}$ satisfies 
\begin{align*}
\hat\bSigma^{\rm LEMF}_{\bf n} &\equiv \E_{\mathrm{LE},\, {\balpha} }(\hat{\bSigma}_{n_0}^{(0)}, \mathbf{D}_1, \dots, \mathbf{D}_L) \\
&= \argmin_{\bSigma \in \bbS^d_{++}} \left( d^2_{\rm LE}(\bSigma, \hat{\bSigma}_{n_0}^{(0)}) + \sum_{\ell=1}^L \alpha_{\ell}\, d^2_{\rm LE}(\bSigma, \mathbf{D}_\ell) \right) %
\end{align*} 
where $\balpha  = \begin{bmatrix}1 & \alpha_1 & \cdots & \alpha_L\end{bmatrix}^\top \in \mathbb{R}^{L + 1}$. Thus $\hat\bSigma^{\rm LEMF}_{\bf n}$ can be interpreted as a \textit{weighted Fréchet average} between high-fidelity $\hat{\bSigma}_{n_0}^{(0)}$ and the low-fidelity perturbations $\mathbf{D}_\ell$, $\ell = 1, \dots, L$, with weights assigned according to how much $\hat{\bSigma}_{n_0}^{(0)}$ is correlated in a generalized sense with each $\mathbf{D}_\ell$. 

\section{Proof of \cref{prop:lemfExists}}
We have defined the log-Euclidean multi-fidelity estimator by 
\begin{multline}
    \hat\bSigma_n^{\rm LEMF} = \hat \bSigma_{n_0}^{(0)} \oplus \; \bigoplus_{\ell = 1}^L \alpha_\ell \odot \left(\hat\bSigma^{(\ell)}_{n_\ell} \ominus \hat\bSigma^{(\ell)}_{n_{\ell - 1}} \right) =\\
    \Exp\left( \Log \hat{\bSigma}_{n_0}^{(0)} + \sum_{\ell=1}^L \alpha_{\ell} \left( \Log \hat{\bSigma}^{(\ell)}_{n_{\ell}} - \Log \hat{\bSigma}^{(\ell)}_{n_{\ell-1}} \right)  \right).
\label{eq:LEMF_app}
\end{multline}
Because we have assumed $n_0, n_1, \dots, n_L > d$, the sample covariance matrices $\hat\bSigma^{(\ell)}_{n_\ell}$, and $\hat\bSigma^{(\ell)}_{n_{\ell - 1}}$, $\ell = 0, \dots, L$ are positive definite almost surely. That is, they are members of $\bbS^d_{++}$. Because $\bbS^d_{++}(\oplus, \odot)$ is a vector space, it follows immediately that the top line of \eqref{eq:LEMF_app}
\[
  \hat\bSigma_n^{\rm LEMF} = \hat \bSigma_{n_0}^{(0)} \oplus \; \bigoplus_{\ell = 1}^L \alpha_\ell \odot \left(\hat\bSigma^{(\ell)}_{n_\ell} \ominus \hat\bSigma^{(\ell)}_{n_{\ell - 1}} \right) \in \bbS^d_{++}
\]
is positive definite. It remains to show that the bottom line of \eqref{eq:LEMF_app} is equivalent to the top line. This we accomplish by straightforward algebra, 
\begin{multline}
    \hat\bSigma_n^{\rm LEMF} = \hat \bSigma_{n_0}^{(0)} \oplus \; \bigoplus_{\ell = 1}^L \alpha_\ell \odot \left(\hat\bSigma^{(\ell)}_{n_\ell} \ominus \hat\bSigma^{(\ell)}_{n_{\ell - 1}} \right)  \\
    = \hat \bSigma_{n_0}^{(0)} \oplus \; \bigoplus_{\ell = 1}^L \alpha_\ell \odot \left(\hat\bSigma^{(\ell)}_{n_\ell} \oplus \Exp(-\Log \hat\bSigma^{(\ell)}_{n_{\ell - 1}}) \right) \\
    = \hat \bSigma_{n_0}^{(0)} \oplus \; \bigoplus_{\ell = 1}^L \Exp\left( \alpha_\ell  \left(\Log \hat\bSigma ^{(\ell)}_{n_\ell} -\Log \hat\bSigma^{(\ell)}_{n_{\ell - 1}} \right) \right) \\
    = \hat \bSigma_{n_0}^{(0)} \oplus \Exp \left(\sum_{\ell = 1}^L\alpha_\ell  \left(\Log \hat\bSigma ^{(\ell)}_{n_\ell} -\Log \hat\bSigma^{(\ell)}_{n_{\ell - 1}} \right) \right) \\
    = \Exp\left( \Log \hat{\bSigma}_{n_0}^{(0)} + \sum_{\ell=1}^L \alpha_{\ell} \left( \Log \hat{\bSigma}^{(\ell)}_{n_{\ell}} - \Log \hat{\bSigma}^{(\ell)}_{n_{\ell-1}} \right)  \right)\,.
\end{multline}

\section{Proof of \cref{prop:MSELCV}}
We first note that, by properties of inner and outer products on $\bbS^d$, for two random matrices ${\mathbf{A}} ,{\mathbf{B}} \in \bbS^d$ with $\mathbb{E}[{\mathbf{A}}] = \mathbb{E}[{\mathbf{B}}]= \zeros$ we have
\begin{multline}
    \mathbb{E} \left[\inner{ {\mathbf{A}} }{ {\mathbf{B}} }_{\mathrm{F}} \right] = \E[\tr(\mathbf{A} \otimes \mathbf{B})] \\ = \tr(\E[\mathbf{A} \otimes \mathbf{B}]) = \tr(\Cov[\mathbf{A}, \mathbf{B}]),
\label{eq:TrCrossCovRelation}
\end{multline}
where we have used linearity of trace. 
In the special case where ${\mathbf{A}} = {\mathbf{B}}$ we have
\begin{equation}
    \mathbb{E}[ \norm{ {\mathbf{A}} }^2_{\mathrm{F}} ]
    = \E[\langle \mathbf{A}, \, \mathbf{A} \rangle_{\rm F}] =
    \tr\left( \cov[ {\mathbf{A}} ] \right) \, .
\label{eq:TrCovRelation}
\end{equation}

We now proceed with the remainder of the proposition.  %

Recall that the EMF estimator is defined 
\begin{equation*}
    \hat{\bSigma}_{\bn}^{\mathrm{EMF}}
    =
    \hat{\bSigma}_{n_0}^{(0)} + \sum\nolimits_{\ell=1}^L \alpha_{\ell} \left( \hat{\bSigma}_{n_{\ell}}^{(\ell)} - \hat{\bSigma}^{(\ell)}_{n_{\ell-1}} \right)  \, ,
    \end{equation*}
    where $\balpha = [\alpha_1,\ldots,\alpha_L]^{\top} \in \mathbb{R}^L$ are the control variate weights
    and 
    \begin{equation} 
    \begin{gathered}
    \hat{\bSigma}^{(\ell)}_{n_{\ell}} = \frac{1}{n_\ell}\sum\nolimits_{i=1}^{n_\ell} \by^{(\ell)}_i \left(\by^{(\ell)}_i \right)^{\top}, \\
    \hat{\bSigma}^{(\ell)}_{n_{\ell - 1}} = \frac{1}{n_{\ell - 1} }\sum\nolimits_{i=1}^{n_{\ell - 1}} \by^{(\ell)}_i\left(\by^{(\ell)}_i \right)^{\top}
    \end{gathered}
    \label{eq:scms_ndenom} 
    \end{equation} 
    are sample covariance matrices computed from $\{\by^{(\ell)}_i\}_{i=1}^{n_\ell}$, $\ell = 0, \dots, L$ and we have assumed that $\E[\by^{(\ell)}]$ is known and without loss of generality equal to $\bf0$, $\ell = 0, \dots, L$. Because the means of $\by^{(0)}, \dots, \by^{(L)}$ are known, in defining the SCMs in \cref{eq:scms_ndenom} we have normalized by $n_\ell$ and $n_{\ell - 1}$ to obtain unbiased sample covariance estimates.  Under this assumption the EMF estimator \eqref{eq:LCV} with SCMs defined as in \eqref{eq:scms_ndenom} is unbiased, 
    \begin{multline*} 
    \E \left[\hat{\bSigma}_{\bn}^{\mathrm{EMF}} \right]
    = \\ 
    \E \left[\hat{\bSigma}_{n_0}^{(0)} + \sum\nolimits_{\ell=1}^L \alpha_{\ell} \left( \hat{\bSigma}_{n_{\ell}}^{(\ell)} - \hat{\bSigma}^{(\ell)}_{n_{\ell-1}} \right) \right] = \bSigma. 
    \end{multline*} 
    
Because of the unbiasedness, the Frobenius norm MSE of $\hat\bSigma^{\rm EMF}_{\bn}$ is thus equal to the trace of the covariance of $\hat\bSigma^{\rm EMF}_{\bn},$
\begin{multline*} 
\E[||\hat\bSigma^{\rm EMF}_{\bn} - \bSigma||_F^2] = \E[||\hat\bSigma^{\rm EMF}_{\bn} - \E[\hat\bSigma^{\rm EMF}_{\bn}] ||_F^2] \\ = \tr(\Cov[\hat\bSigma^{\rm EMF}_{\bn}]).
\end{multline*} 

The covariance of $\hat\bSigma^{\rm EMF}_{\bn}$ can be decomposed
\begin{multline} 
\Cov[\hat\bSigma^{\rm EMF}_{\bn}] = \Cov[\hat{\bSigma}_{n_0}^{(0)}]  \\
+ \sum_{\ell = 1}^L \alpha_\ell^2 (\Cov[\hat{\bSigma}_{n_{\ell}}^{(\ell)}] + \Cov[\hat{\bSigma}_{n_{\ell - 1}}^{(\ell)}] )  \\
+ \sum_{\ell = 1}^L \alpha_\ell \left(\Cov[\hat{\bSigma}_{n_{\ell}}^{(\ell)}, \hat{\bSigma}_{n_0}^{(0)}] + \Cov[\hat{\bSigma}_{n_0}^{(0)}, \hat{\bSigma}_{n_{\ell}}^{(\ell)}]\right. \\ 
- \left. \Cov[\hat{\bSigma}_{n_{\ell - 1}}^{(\ell)}, \hat{\bSigma}_{n_0}^{(0)}] - \Cov[\hat{\bSigma}_{n_0}^{(0)}, \hat{\bSigma}_{n_{\ell - 1}}^{(\ell)}] \right) \\
+ \sum_{\ell = 1}^L \sum_{m = \ell + 1}^L \alpha_\ell\alpha_m \left(\Cov[\hat{\bSigma}_{n_{\ell}}^{(\ell)}, \hat{\bSigma}_{n_{m}}^{(m)}] - \Cov[\hat{\bSigma}_{n_{\ell - 1}}^{(\ell)}, \hat{\bSigma}_{n_{m}}^{(m)}]\right. \\ 
- \left.\Cov[\hat{\bSigma}_{n_{\ell}}^{(\ell)}, \hat{\bSigma}_{n_{m-1}}^{(m)}] + \Cov[\hat{\bSigma}_{n_{\ell - 1}}^{(\ell)}, \hat{\bSigma}_{n_{m - 1}}^{(m)}] \right) \\
+ \sum_{\ell = 1}^L \sum_{m = \ell + 1}^L \alpha_\ell\alpha_m \left(\Cov[\hat{\bSigma}_{n_{m}}^{(m)}, \hat{\bSigma}_{n_{\ell}}^{(\ell)}] - \Cov[\hat{\bSigma}_{n_{m}}^{(m)}, \hat{\bSigma}_{n_{\ell - 1}}^{(\ell)}]\right. \\ 
- \left.\Cov[\hat{\bSigma}_{n_{m-1}}^{(m)}, \hat{\bSigma}_{n_{\ell}}^{(\ell)}] + \Cov[\hat{\bSigma}_{n_{m - 1}}^{(m)}, \hat{\bSigma}_{n_{\ell - 1}}^{(\ell)}] \right) \\
- \sum_{\ell = 1}^L \alpha_\ell^2 \left(\Cov[\hat{\bSigma}_{n_{\ell}}^{(\ell)}, \hat{\bSigma}_{n_{\ell - 1}}^{(\ell)})] + \Cov[\hat{\bSigma}_{n_{\ell - 1}}^{(\ell)}, \hat{\bSigma}_{n_{\ell}}^{(\ell)}] \right)
\label{eq:covLCV}
\end{multline}

We will need a result analogous to Lemma 3.2 in \citet{PWG16} in order to simplify \eqref{eq:covLCV}. 
\begin{lemma}
Consider $\hat\bSigma^{(\ell)}_{m_i} = \frac{1}{m_i}\sum_{i' = 1}^{m_i} \by_{i'}^{(\ell)}(\by_{i'}^{(\ell)})^{\top}$ and $\hat\bSigma^{(t)}_{m_j} = \frac{1}{m_j}\sum_{j' = 1}^{m_j} \by_{j'}^{(t)}(\by_{j'}^{(t)})^{\top}$, where $\ell, t \in \{0, \dots, L\}$ and $m_i, m_j \in \mathbb{Z^+}$. It holds that 
\begin{multline*} 
\Cov[\hat\bSigma^{(\ell)}_{m_i}, \; \hat\bSigma^{(t)}_{m_j}] \\ 
= \frac{1}{\max\{m_i, m_j\}}\Cov[\by^{(\ell)}(\by^{(\ell)})^\top, \; \by^{(t)}(\by^{(t)})^\top] 
\end{multline*} 
\end{lemma}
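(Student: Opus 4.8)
The plan is to expand the operator-valued cross-covariance of the two sample averages into a bilinear double sum, use the independence of samples attached to distinct events to annihilate the off-diagonal terms, count the surviving coincident terms, and simplify the prefactor.

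First, I would abbreviate the per-sample second-moment matrices by $\mathbf{Y}^{(\ell)}_{i'} = \by^{(\ell)}(\omega_{i'})(\by^{(\ell)}(\omega_{i'}))^{\top}$, so that $\hat\bSigma^{(\ell)}_{m_i} = \tfrac{1}{m_i}\sum_{i'=1}^{m_i}\mathbf{Y}^{(\ell)}_{i'}$ and analogously for level $t$; each $\mathbf{Y}^{(\ell)}_{i'}$ has mean $\bSigma^{(\ell)}$. Since the covariance operator $\Cov[\cdot,\cdot]$ is bilinear---a consequence of linearity of expectation together with the bilinearity of the outer product $\otimes$---I would pull the averaging constants out and distribute to obtain
\begin{equation*}
\Cov\!\left[\hat\bSigma^{(\ell)}_{m_i},\,\hat\bSigma^{(t)}_{m_j}\right] = \frac{1}{m_i m_j}\sum_{i'=1}^{m_i}\sum_{j'=1}^{m_j}\Cov\!\left[\mathbf{Y}^{(\ell)}_{i'},\,\mathbf{Y}^{(t)}_{j'}\right].
\end{equation*}

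Second---the key step---I would invoke the modeling assumptions of \cref{sec:MFSetup}: realizations attached to distinct events are independent across all fidelity levels, whereas $\by^{(\ell)}(\omega_k)$ and $\by^{(t)}(\omega_k)$ sharing a common event $\omega_k$ are jointly distributed identically for every $k$. Hence, for $i'\neq j'$ the matrices $\mathbf{Y}^{(\ell)}_{i'}$ and $\mathbf{Y}^{(t)}_{j'}$ are measurable functions of independent events, so $\Cov[\mathbf{Y}^{(\ell)}_{i'},\mathbf{Y}^{(t)}_{j'}]$ is the zero operator; for $i'=j'$ it equals the population cross-covariance $\Cov[\by^{(\ell)}(\by^{(\ell)})^{\top},\,\by^{(t)}(\by^{(t)})^{\top}]$ by identical distribution of the coupled pair. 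This reasoning is unchanged when $\ell=t$, since distinct events still decorrelate the two summands. Only the diagonal indices $i'=j'$ survive, and they run over $k=1,\dots,\min\{m_i,m_j\}$, yielding exactly $\min\{m_i,m_j\}$ identical copies of the population cross-covariance.

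Finally, I would collect the prefactor and simplify using $m_i m_j = \min\{m_i,m_j\}\cdot\max\{m_i,m_j\}$, so that $\min\{m_i,m_j\}/(m_i m_j) = 1/\max\{m_i,m_j\}$, which produces the claimed identity. The main obstacle is the middle step: one must argue carefully that the \emph{only} source of statistical coupling between the two averages is shared events, so that every off-diagonal cross-covariance genuinely vanishes and every diagonal term is identical regardless of the levels $\ell,t$; everything else is bookkeeping.
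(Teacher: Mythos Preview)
Your proposal is correct and follows essentially the same approach as the paper's proof: expand by bilinearity into a double sum, kill the off-diagonal terms by independence across distinct events, count the $\min\{m_i,m_j\}$ surviving diagonal terms, and simplify $\min\{m_i,m_j\}/(m_i m_j)=1/\max\{m_i,m_j\}$. Your write-up is slightly more explicit about the justification (measurability, the $\ell=t$ case) than the paper's, but the argument is the same.
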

\begin{proof}
Proceeding algebraically, we see 
\begin{multline*}
    \Cov[\hat\bSigma^{(\ell)}_{m_i}, \; \hat\bSigma^{(t)}_{m_j}] = \\ \Cov \left[\frac{1}{m_i}\sum_{i' = 1}^{m_i} \by_{i'}^{(\ell)}(\by_{i'}^{(\ell)})^{\top}, \; \frac{1}{m_j}\sum_{j' = 1}^{m_j} \by_{j'}^{(t)}(\by_{j'}^{(t)})^{\top} \right] \\ 
    = \frac{1}{m_i m_j}\Cov \left[\sum_{i' = 1}^{m_i} \by_{i'}^{(\ell)}(\by_{i'}^{(\ell)})^{\top}, \; \sum_{j' = 1}^{m_j} \by_{j'}^{(t)}(\by_{j'}^{(t)})^{\top} \right].
\end{multline*}
$\by_{i'}^{(\ell)}(\by_{i'}^{(\ell)})^{\top}$ and $\by_{j'}^{(t)}(\by_{j'}^{(t)})^{\top}$ are independent except in the case that $i' = j'$. Thus the terms that remain are 
\begin{multline*}
    \Cov[\hat\bSigma^{(\ell)}_{m_i}, \; \hat\bSigma^{(t)}_{m_j}] = \\
    \frac{1}{m_i m_j} \sum_{i' = 1}^{\min\{m_i, m_j\}} \Cov[\by_{i'}^{(\ell)}(\by_{i'}^{(\ell)})^{\top}, \; \by_{i'}^{(t)}(\by_{i'}^{(t)})^{\top}] \\
    = \frac{\min\{m_i, m_j\}}{m_i m_j} \Cov[\by^{(\ell)}(\by^{(\ell)})^{\top}, \; \by^{(t)}(\by^{(t)})^{\top}] \\
    = \frac{1}{\max\{m_i, m_j\}} \Cov[\by^{(\ell)}(\by^{(\ell)})^{\top}, \; \by^{(t)}(\by^{(t)})^{\top}].
\end{multline*}
\end{proof}

We return to \cref{eq:covLCV} with this result and simplify, seeing that the doubly-indexed sums are zero, 
\begin{multline*} 
\Cov[\hat\bSigma^{\rm EMF}_{\bn}] = \tfrac{1}{n_0}\Cov[\by^{(0)}(\by^{(0)})^\top]  \\
+ \sum_{\ell = 1}^L \alpha_\ell^2(\tfrac{1}{n_\ell} + \tfrac{1}{n_{\ell - 1}})\Cov[\by^{(\ell)}(\by^{(\ell)})^\top] \\
+ \sum_{\ell = 1}^L \alpha_\ell (\tfrac{1}{n_{\ell}} - \tfrac{1}{n_{\ell - 1}})\left(\Cov[\by^{(\ell)}(\by^{(\ell)})^\top, \; \by^{(0)}(\by^{(0)})^\top] \right. + \\
+ \left. \Cov[\by^{(0)}(\by^{(0)})^\top, \;\by^{(\ell)}(\by^{(\ell)})^\top] \right) \\
- \sum_{\ell = 1}^L \alpha_\ell^2 \tfrac{2}{n_\ell} \Cov[\by^{(\ell)}(\by^{(\ell)})^\top].  
\end{multline*}
We rearrange and write the covariance of $\hat\bSigma^{\rm EMF}_{\bn}$ compactly as  
\begin{multline*}
\Cov[\hat\bSigma^{\rm EMF}_{\bn}] = \tfrac{1}{n_0}\Cov[\by^{(0)}(\by^{(0)})^\top]  \\
+ \sum_{\ell = 1}^L (\tfrac{1}{n_{\ell - 1}} - \tfrac{1}{n_\ell}) \left(\alpha_\ell^2 \Cov[\by^{(\ell)}(\by^{(\ell)})^\top]\right. \\ 
- \alpha_\ell \left(\Cov[\by^{(\ell)}(\by^{(\ell)})^\top, \by^{(0)}(\by^{(0)})^\top]\right. \\ 
+ \left.\left. \Cov[\by^{(0)}(\by^{(0)})^\top, \by^{(\ell)}(\by^{(\ell)})^\top] \right) \right).
\end{multline*}
The MSE of $\hat\bSigma^{\rm EMF}_{\bn}$ is equal to the trace of its covariance,
\begin{multline*}
    \E[||\hat\bSigma^{\rm EMF}_{\bn} - \bSigma||] = \tr(\Cov[\hat\bSigma^{\rm EMF}_{\bn}]) \\ = \frac{\sigma_0^2}{n_0} + \sum_{\ell = 1}^L (\frac{1}{n_{\ell - 1}} - \frac{1}{n_\ell})(\alpha_\ell^2\sigma^2_\ell - 2\alpha_\ell\rho_{\ell}\sigma_0\sigma_\ell )
\end{multline*}
which is the result we sought to show.

\section{Proof of Proposition~\ref{prop:MSETS}}

\begin{proof}
Canceling the matrix logarithm with the matrix exponential in the LEMF estimator~\eqref{eq:TS} yields
\begin{align*}
&\norm{\Log \hat{\bSigma}_{\bn}^{\mathrm{LEMF}} -  \Log \bSigma}_{\mathrm{F}}
=\\
&
\norm{  \Log \hat{\bSigma}_{n_0}^{(0)} + \sum_{\ell=1}^L \alpha_{\ell} \left( \Log \hat{\bSigma}^{(\ell)}_{n_{\ell}} - \Log \hat{\bSigma}^{(\ell)}_{n_{\ell-1}} \right)  - \Log \bSigma  }_{\mathrm{F}} \, .
\end{align*}
Since $\norm{\bSigma - \eye}_{\mathrm{F}} < h/4 < 1$ we may define the matrix logarithm through its Taylor series
\begin{equation}
    \Log(\bSigma) = \sum_{k=1}^{\infty} \frac{(-1)^{k+1}}{k}(\bSigma - \eye)^k \, ,
\label{eq:Prop3.2Eq1}
\end{equation}
which converges absolutely.  Similarly, by the triangle inequality
\begin{equation}
    \norm{\bSigma^{(\ell)} - \eye}_{\mathrm{F}} 
    \le 
    \norm{\bSigma - \eye}_{\mathrm{F}}
    +
    \norm{\bSigma^{(\ell)} - \bSigma}_{\mathrm{F}}
    \le
    \frac{h}{2} < 1\, .
\label{eq:Prop3.2Eq2}
\end{equation}
For the sample covariance matrices $\bSigma^{(\ell)}_{n_{\ell}}$ and $\bSigma^{(\ell)}_{n_{\ell-1}}$, the law of large numbers implies almost surely that for any $\delta_{\ell} > 0$ sufficiently large sample size $n_{\ell}$ guarantees
\begin{equation}
\max\left\{ \norm{\hat{\bSigma}^{(\ell)}_{n_{\ell}} - \bSigma^{(\ell)}}_{\mathrm{F}} 
,\ \norm{\hat{\bSigma}^{(\ell)}_{n_{\ell-1}} - \bSigma^{(\ell)}}_{\mathrm{F}} \right\} 
<
\delta_{\ell} \, ,
\label{eq:Prop3.2Eq3}
\end{equation}
for $\ell = 0,\ldots,L$.  Setting $\delta_{\ell} = h/4$ and applying the triangle inequality again with the result of~\eqref{eq:Prop3.2Eq2} gives
\begin{multline}
\max\left\{ \norm{\hat{\bSigma}^{(\ell)}_{n_{\ell}} - \eye}_{\mathrm{F}} 
,\ \norm{\hat{\bSigma}^{(\ell)}_{n_{\ell-1}} - \eye}_{\mathrm{F}} \right\}  < \\
\delta_{\ell} + \norm{\bSigma^{(\ell)} - \eye}_{\mathrm{F}} 
<
\delta_{\ell} + \frac{h}{2} < \frac{3h}{4} < 1\, .
\label{eq:Prop3.2Eq4}
\end{multline}

Expanding the Taylor series for each matrix logarithm up to first-order (i.e. $\Log(\bSigma) \approx \bSigma - \eye$) and canceling out the resulting identity matrices gives
\begin{multline}
\norm{\Log \hat{\bSigma}_{\bn}^{\mathrm{LEMF}} -  \Log \bSigma}_{\mathrm{F}}
=\\
    \norm{ \hat{\bSigma}_{n_0}^{(0)} + \sum_{\ell=1}^L \alpha_{\ell} \left( \hat{\bSigma}^{(\ell)}_{n_{\ell}} - \hat{\bSigma}^{(\ell)}_{n_{\ell-1}} \right)  - \bSigma  + {\mathbf{M}} }_{\mathrm{F}} \, ,
\label{eq:Prop3.2Eq5}
\end{multline}
where the matrix ${\mathbf{M}}$ is the remainder of the first-order Taylor expansions
\begin{equation}
\begin{split}
    {\mathbf{M}} =& \sum_{k=2}^{\infty} \frac{(-1)^{k+1}}{k}(\hat{\bSigma}_{n_0}^{(0)} - \eye)^k \\
    &+ \sum_{\ell=1}^L \alpha_{\ell}\left( \sum_{k=2}^{\infty} \frac{(-1)^{k+1}}{k}(\hat{\bSigma}_{n_{\ell}}^{(\ell)} - \eye)^k \right)\\
    &- 
    \sum_{\ell=1}^L \alpha_{\ell}\left(
    \sum_{k=2}^{\infty} \frac{(-1)^{k+1}}{k}(\hat{\bSigma}_{n_{\ell}-1}^{(\ell)} - \eye)^k\right) \\
    &- 
    \sum_{k=2}^{\infty} \frac{(-1)^{k+1}}{k}(\bSigma - \eye)^k \, .
\end{split}
\label{eq:Prop3.2Eq6}
\end{equation}
Applying both the triangle and reverse-triangle inequalities gives
\begin{equation}
\left | 
\norm{\Log \hat{\bSigma}_{\bn}^{\mathrm{LEMF}} -  \Log \bSigma}_{\mathrm{F}} 
-
\norm{\hat{\bSigma}_{\bn}^{\mathrm{EMF}} -   \bSigma}_{\mathrm{F}} 
\right|
\le \norm{{\mathbf{M}}}_{\mathrm{F}}\, .
\label{eq:Prop3.2Eq7}
\end{equation}
To bound the right-hand-side of~\eqref{eq:Prop3.2Eq7}, we again use the triangle inequality and that $\norm{{\mathbf{A}}^k}_{\mathrm{F}} \le \norm{\mathbf{A}}_{\mathrm{F}}^k$ to obtain
\begin{equation}
\begin{split}
    \norm{{\mathbf{M}}}_{F} & \le  \sum_{k=2}^{\infty} \frac{1}{k}\norm{\hat{\bSigma}_{n_0}^{(0)} - \eye}_{\mathrm{F}}^k \\
    &+ \sum_{\ell=1}^L \alpha_{\ell}\left(
    \sum_{k=2}^{\infty} \frac{1}{k}\norm{\hat{\bSigma}_{n_{\ell}}^{(\ell)} - \eye}_{\mathrm{F}}^k\right)\\
    &+ 
    \sum_{\ell=1}^L \alpha_{\ell}\left(
    \sum_{k=2}^{\infty} \frac{1}{k}\norm{\hat{\bSigma}_{n_{\ell-1}}^{(\ell)} - \eye}_{\mathrm{F}}^k\right) \\
    &+ 
    \sum_{k=2}^{\infty} \frac{1}{k} \norm{\bSigma - \eye}_{\mathrm{F}}^k \, .
\end{split}
\label{eq:Prop3.2Eq8}
\end{equation}
By~\eqref{eq:Prop3.2Eq4} and the assumption $\norm{\bSigma - \eye}_{\mathrm{F}} < h/4$, each series in~\eqref{eq:Prop3.2Eq8} converges absolutely.  Bounding $1/k$ with $1$ and factoring out a quadratic power for each term in~\eqref{eq:Prop3.2Eq8} gives
\begin{equation}
\begin{split}
    \norm{{\mathbf{M}}}_{F} & \le  \norm{\hat{\bSigma}_{n_0}^{(0)} - \eye}_{\mathrm{F}}^2 \sum_{k=0}^{\infty} \norm{\hat{\bSigma}_{n_0}^{(0)} - \eye}_{\mathrm{F}}^k \\
    &+ \sum_{\ell=1}^L \alpha_{\ell}  \norm{\hat{\bSigma}_{n_{\ell}}^{(\ell)} - \eye}_{\mathrm{F}}^2  \left(
    \sum_{k=0}^{\infty} \norm{\hat{\bSigma}_{n_{\ell}}^{(\ell)} - \eye}_{\mathrm{F}}^k\right)\\
    &+ 
    \sum_{\ell=1}^L \alpha_{\ell} \norm{\hat{\bSigma}_{n_{\ell-1}}^{(\ell)} - \eye}_{\mathrm{F}}^2   \left(
    \sum_{k=0}^{\infty} \norm{\hat{\bSigma}_{n_{\ell-1}}^{(\ell)} - \eye}_{\mathrm{F}}^k\right) \\
    &+ 
    \norm{\bSigma - \eye}_{\mathrm{F}}^2 \sum_{k=0}^{\infty} \norm{\bSigma - \eye}_{\mathrm{F}}^k \, .
\end{split}
\label{eq:Prop3.2Eq9}
\end{equation}
Evaluating each of these geometric series gives the bound
\begin{equation}
\begin{split}
    \norm{{\mathbf{M}}}_{\mathrm{F}} & \le  \frac{\norm{\hat{\bSigma}_{n_0}^{(0)} - \eye}_{\mathrm{F}}^2}{1 - \norm{\hat{\bSigma}_{n_0}^{(0)} - \eye}_{\mathrm{F}}} \\
    &+ \sum_{\ell=1}^L \alpha_{\ell}\frac{\norm{\hat{\bSigma}^{(\ell)}_{n_{\ell}} - \eye}_{\mathrm{F}}^2}{1 - \norm{\hat{\bSigma}_{n_{\ell}}^{(\ell)} - \eye}_{\mathrm{F}}}\\
    &+ 
    \sum_{\ell=1}^L \alpha_{\ell}\frac{\norm{\hat{\bSigma}^{(\ell)}_{n_{\ell-1}} - \eye}_{\mathrm{F}}^2}{1 - \norm{\hat{\bSigma}_{n_{\ell-1}}^{(\ell)} - \eye}_{\mathrm{F}}} \\
    &+ 
    \frac{\norm{\bSigma - \eye}_{\mathrm{F}}^2}{1 - \norm{\bSigma - \eye}_{\mathrm{F}}}   \, .
\end{split}
\label{eq:Prop3.2Eq10}
\end{equation}
By~\eqref{eq:Prop3.2Eq4} we have that
\begin{multline}
    \max\left\{ 
    \frac{1}{1 - \norm{\hat{\bSigma}^{(\ell)}_{n_{\ell}} - \eye}_{\mathrm{F}}},\ 
    \frac{1}{1 - \norm{\hat{\bSigma}^{(\ell)}_{n_{\ell-1}} - \eye}_{\mathrm{F}}}
    \right\}
    < \\
    \frac{1}{1 - 3h/4} < 4 \, ,
\label{eq:Prop3.2Eq11}
\end{multline}
and similarly
\begin{equation}
 \frac{1}{1 - \norm{\bSigma - \eye}_{\mathrm{F}}} < \frac{1}{1 - h/4} < \frac{4}{3} \, .
\label{eq:Prop3.2Eq12}
\end{equation}
Substituting the bounds~\eqref{eq:Prop3.2Eq11} and~\eqref{eq:Prop3.2Eq12} into~\eqref{eq:Prop3.2Eq10} gives
\begin{equation}
\begin{split}
    \norm{{\mathbf{M}}}_{\mathrm{F}} & \le  4 \norm{\hat{\bSigma}_{n_0}^{(0)} - \eye}_{\mathrm{F}}^2 \\
    &+ 4\sum_{\ell=1}^L \alpha_{\ell} \norm{\hat{\bSigma}^{(\ell)}_{n_{\ell}} - \eye}_{\mathrm{F}}^2\\
    &+ 
    4\sum_{\ell=1}^L \alpha_{\ell} \norm{\hat{\bSigma}^{(\ell)}_{n_{\ell-1}} - \eye}_{\mathrm{F}}^2\\
    &+ 
    \frac{4}{3}\norm{\bSigma - \eye}_{\mathrm{F}}^2 \, .
\end{split}
\label{eq:Prop3.2Eq13}
\end{equation}
Applying the bound~\eqref{eq:Prop3.2Eq4} with~\eqref{eq:Prop3.2Eq13} gives
\begin{multline}
    \norm{ {\mathbf{M}}}_{\mathrm{F}} 
    \le
    4 \left( \frac{3h}{4} \right)^2 + 8\sum_{\ell=1}^L \alpha_{\ell} \left( \frac{3h}{4} \right)^2  + \frac{4}{3}\left( \frac{h}{4} \right)^2 \\
    \le
    \left(\frac{7}{3}  + \frac{9}{2}\sum_{\ell=1}^L \alpha_{\ell} \right) h^2
    = K h^2 \, .
\label{eq:Prop3.2Eq14}
\end{multline}
Factoring using the difference of squares formula ($a^2 - b^2 = (a-b)(a+b)$) gives
\begin{equation}
\begin{split}
    &\left| \norm{\Log \hat{\bSigma}_{\bn}^{\mathrm{LEMF}} -  \Log \bSigma}_{\mathrm{F}}^2 
- 
\norm{\hat{\bSigma}_{\bn}^{\mathrm{EMF}} -   \bSigma}_{\mathrm{F}}^2 
\right| = \\
& \left| \norm{\Log \hat{\bSigma}_{\bn}^{\mathrm{LEMF}} -  \Log \bSigma}_{\mathrm{F}}
+
\norm{\hat{\bSigma}_{\bn}^{\mathrm{EMF}} -   \bSigma}_{\mathrm{F}}  \right| \\
& \times \left| \norm{\Log \hat{\bSigma}_{\bn}^{\mathrm{LEMF}} -  \Log \bSigma}_{\mathrm{F}}
-
\norm{\hat{\bSigma}_{\bn}^{\mathrm{EMF}} -   \bSigma}_{\mathrm{F}}  \right|
\, ,
\end{split}
\label{eq:Prop3.2Eq15}
\end{equation}
Applying the law of large numbers again, we know that almost surely for sufficiently large sample sizes
\begin{equation}
\max\left\{ \norm{\hat{\bSigma}_{\bn}^{\mathrm{EMF}} -   \bSigma}_{\mathrm{F}},\ \norm{\Log \hat{\bSigma}_{\bn}^{\mathrm{LEMF}} -  \Log \bSigma}_{\mathrm{F}}\right\} < 1 \, .
\label{eq:Prop3.2Eq16}
\end{equation}
Combining the bound~\eqref{eq:Prop3.2Eq14} with~\eqref{eq:Prop3.2Eq7} and substituting into~\eqref{eq:Prop3.2Eq15} gives
\begin{equation}
\left| \norm{\Log \hat{\bSigma}_{\bn}^{\mathrm{LEMF}} -  \Log \bSigma}_{\mathrm{F}}^2 
- 
\norm{\hat{\bSigma}_{\bn}^{\mathrm{EMF}} -   \bSigma}_{\mathrm{F}}^2 
\right| \le 2K h^2 \, .
\end{equation}
Taking the expectation and applying Jensen's inequality gives the result.
\begin{multline}
\left| \mathbb{E}\left[ \norm{\Log \hat{\bSigma}_{\bn}^{\mathrm{LEMF}} -  \Log \bSigma}_{\mathrm{F}}^2 \right]
- 
\mathbb{E}\left[ \norm{\hat{\bSigma}_{\bn}^{\mathrm{EMF}} -   \bSigma}_{\mathrm{F}}^2 \right]
\right| \\
\le \mathbb{E}\left[ \left| \norm{\Log \hat{\bSigma}_{\bn}^{\mathrm{LEMF}} -  \Log \bSigma}_{\mathrm{F}}^2 
- 
\norm{\hat{\bSigma}_{\bn}^{\mathrm{EMF}} -   \bSigma}_{\mathrm{F}}^2 
\right| \right]\\
\le 2K h^2 \, .
\end{multline}
\end{proof}

\section{Proof of \Cref{thm:OptimalAllocation} and \Cref{cor:MSEComparison}}
Since the Euclidean MSE (\cref{prop:MSELCV}) has the exact same form as the MSE of the scalar multi-fidelity Monte Carlo estimator in \citet{PWG16}, the optimal allocations and weights there apply directly to the Euclidean multi-fidelity covariance estimator \eqref{eq:LCV} and in first-order to the LEMF estimator \eqref{eq:TS}, with the exception that our definitions of $\sigma^2_\ell$ and $\rho_\ell$ are generalized for matrices. Equation~\ref{eq:WhenBetterBound} directly follows from \eqref{eq:FirstOrderMSELEMF}.

\begin{figure*}[!t]
\begin{lstlisting}[language=Python,frame = single]
from numpy import cov
from scipy.linalg import logm, expm

# compute sample-size vector n[] and coefficients a[] as in Proposition 3.4
# collect samples n_l x d of level l = 0, ..., L in array samps[l]

S = logm(cov(samps[0].T))
for i in range(len(a)):
  S = S + a[i]*(logm(cov(samps[i+1].T))-logm(cov(samps[i+1][:n[i]].T)))
S = expm(S)
\end{lstlisting}
\caption{This code sketches an implementation of the LEMF estimator. It builds on standard numerical linear algebra routines that are readily available in NumPy/SciPy \citep{harris2020array,2020SciPy-NMeth}.}
\label{fig:CodeListing}
\end{figure*}
\section{Algorithmic Description and Code Sketch of LEMF Estimator}\label{sec:Appdx:Code}
The code shown in Figure~\ref{fig:CodeListing} sketches an implementation of the LEMF estimator. It is important to note that it only requires numerical linear algebra functions that are readily available in, e.g., NumPy/SciPy \citep{harris2020array,2020SciPy-NMeth}. A detailed implementation to reproduce the shown numerical results is provided in the supplemental material.

\section{Supplemental to ``Motivating Example with Gaussian''}
\label{appx:ToyGaussian}

Consider high-fidelity samples $\by^{(0)} \sim N(\zeros, \bSigma)$ with $\bSigma \in \mathbb{S}^d_{++}$ and low-fidelity samples which correspond to high-fidelity samples corrupted by additional independent noise,
\begin{equation}
    \by^{(\ell)} = \by + \beps^{(\ell)}\, ,\qquad \beps \sim N(\zeros, \bGamma^{(\ell)}) \, .
\label{eq:ToyGaussianLFSample}
\end{equation}
Because the samples are Gaussian we may compute explicitly the generalized variances~\eqref{eq:GeneralizedVariance} and generalized correlations~\eqref{eq:GeneralizedCorrelation} needed for the optimal sample allocation in Theorem~\ref{thm:OptimalAllocation}.

\paragraph{Generalized variance}
For the generalized variance we have that
\begin{align*}
    &\tr(\cov[\by^{(0)}(\by^{(0)})^{\top}]) \\
    &= \mathbb{E}\left[ \norm{\by^{(0)} (\by^{(0)})^{\top} - \bSigma}^2_{\mathrm{F}} \right] \\
    &= 
    \mathbb{E}\left[ \norm{\by^{(0)} (\by^{(0)})^{\top}}^2_{\mathrm{F}} \right] 
    -2 \mathbb{E}\left[\inner{ \by^{(0)} (\by^{(0)})^{\top} }{\bSigma}_{\mathrm{F}}\right]
    + \norm{\bSigma}_{\mathrm{F}}^2 \\
    &= 
    \mathbb{E}\left[ \tr(\by^{(0)} (\by^{(0)})^{\top}\by^{(0)} (\by^{(0)})^{\top}) \right] - \norm{\bSigma}^2_{\mathrm{F}} \, .
\end{align*}

Focusing on the first term we see
\begin{multline*}
\tr\left(\norm{\by^{(0)}}^2 \by^{(0)} (\by^{(0)})^{\top} \right) 
= \norm{\by^{(0)}}^2 \sum_{i=1}^d \left(y_i^{(0)} \right)^2 
= \norm{\by^{(0)}}^4 \, ,
\end{multline*}
so that
\[
    \tr\left(\cov[\by^{(0)}(\by^{(0)})^{\top}]\right) = \mathbb{E}\left[ \norm{\by^{(0)}}^4 \right] - \norm{\bSigma}^2_{\mathrm{F}} \, ,
\]
and is analogous to the formula $\mathrm{Var}[x] = \mathbb{E}[x^2] - \mathbb{E}[x]^2$ for scalar random variables.  For a multivariate Gaussian random variable with zero mean we have that
\begin{equation}
    \mathbb{E}\left[ \norm{\by^{(0)}}^4 \right] = \tr(\bSigma)^2 + 2\tr(\bSigma^2) \, .
\label{eq:FourthMomentGaussian}
\end{equation}
Therefore,
\begin{equation}
    \tr(\cov[\by^{(0)}(\by^{(0)})^{\top}]) = \tr(\bSigma^2) + \tr(\bSigma)^2 \, .
\label{eq:ToyGaussianGeneralizedVariance}
\end{equation}
Similarly, we may apply~\eqref{eq:ToyGaussianGeneralizedVariance} for the low-fidelity samples $\by^{(\ell)} \sim N(\zeros, \bSigma + \bGamma^{(\ell)})$.

\paragraph{Generalized correlation}  In order to compute the generalized correlation we compute the cross-covariance,
\begin{align*}
    &\tr(\cov[\by^{(0)} (\by^{(0)})^{\top},\ (\by^{(\ell)})(\by^{(\ell)})^{\top}]) \\
    &=
    \mathbb{E}\left[ \inner{\by^{(0)}(\by^{(0)})^{\top} - \bSigma}{(\by^{(\ell)})(\by^{(\ell)})^{\top} - \bSigma^{(\ell)}}_{\mathrm{F}} \right] \\
    &= 
    \mathbb{E}\left[ \tr\left( \by^{(0)}(\by^{(0)})^{\top}(\by^{(\ell)})(\by^{(\ell)})^{\top} \right) \right] - \tr(\bSigma \bSigma^{(\ell)}) \\
    &= 
    \mathbb{E}\left[ \left( (\by^{(0)})^{\top}(\by^{(\ell)})\right)^2 \right] - \tr(\bSigma \bSigma^{(\ell)})
    \, ,
\end{align*}
where we have used the notation $\bSigma^{(\ell)} = \bSigma + \bGamma^{(\ell)}$. 
Writing $\by^{(\ell)} = \by^{(0)} + \beps^{(\ell)}$ and expanding the squared inner product gives
\begin{align*}
    \left( (\by^{(0)})^{\top}(\by^{(\ell)})\right)^2 
    &=
    \left(\norm{\by^{(0)}}^2 + (\by^{(0)})^{\top}\beps^{(\ell)} \right)^2\\
    &= \norm{\by^{(0)}}^4 + 2\norm{\by^{(0)}}^2 (\by^{(0)})^{\top}\beps^{(\ell)} + \\ & \qquad \qquad \qquad \qquad \quad \left( (\by^{(0)})^{\top}\beps^{(\ell)}\right)^2 \, .
\end{align*}
For the first term we may use~\eqref{eq:FourthMomentGaussian} while for the second term we have by independence that
\begin{multline*} 
    \mathbb{E}\left[\norm{\by^{(0)}}^2 (\by^{(0)})^{\top}\beps^{(\ell)} \right] 
    = \\
    \mathbb{E}\left[\norm{\by^{(0)}}^2 \by^{(0)} \right]^{\top} \mathbb{E}\left[ \beps^{(\ell)}\right]
    = 0 \, .
\end{multline*} 
For the last term we rely on properties of the trace operator.
\begin{align*}
\mathbb{E}\left[ \left( (\by^{(0)})^{\top}\beps^{(\ell)}\right)^2 \right]
&= 
\mathbb{E}\left[ \tr\left( (\by^{(0)})^{\top}\beps^{(\ell)}(\by^{(0)})^{\top}\beps^{(\ell)} \right) \right] \\
&= 
\mathbb{E}\left[ \tr\left( (\by^{(0)})^{\top}\beps^{(\ell)}(\beps^{(\ell)})^{\top} \by^{(0)} \right) \right] \\
&=
\mathbb{E}\left[ \tr\left( \by^{(0)}(\by^{(0)})^{\top} (\beps^{(\ell)})(\beps^{(\ell)})^{\top} \right) \right] \\
&=
\tr\left( \mathbb{E}[\by^{(0)}(\by^{(0)})^{\top}] \mathbb{E}[ (\beps^{(\ell)})(\beps^{(\ell)})^{\top} ] \right) \\
&= 
\tr(\bSigma \bGamma^{(\ell)})\, .
\end{align*}
Combining everything gives the cross-covariance 
\begin{multline}
    \tr(\cov[\by^{(0)} (\by^{(0)})^{\top},\ (\by^{(\ell)})(\by^{(\ell)})^{\top}]) \\
    =
    \tr(\bSigma)^2 + \tr(\bSigma^2) = \sigma_0^2 \, ,
\label{eq:ToyGaussianGeneralizedCrossCov}
\end{multline}
which is analogous to the scalar setting where $\cov[y,\epsilon] = 0$ so that $\cov[y,y+\epsilon] = \mathrm{Var}[y]$.  Using the notation $\sigma_0,\sigma_{\ell}$, and $\rho_{\ell}$ defined in Section~\ref{sec:3.3} we have that the generalized correlation is given by
\begin{equation}
    \rho_{\ell} = \frac{\sigma_0}{\sigma_{\ell}} \, .
\end{equation}
Notice that $\sigma_{\ell} > \sigma$ and so $\rho_{\ell} < 1$.

\paragraph{Example set up}

For the example presented in Section~\ref{sec:NumExp:Gaussian}, we select $\bSigma$ randomly from the $\text{Wishart}(4)$ ensemble, that is, $\bSigma = {\mathbf{A}}^{\top}{\mathbf{A}}$ where ${\mathbf{A}} \in \mathbb{R}^{4\times 4}$ has i.i.d. standard normal entries. The particular realization of $\bSigma$ we obtained was
\[
    \bSigma = \begin{bmatrix}
2.52 &-0.17 &0.67 &-0.98\\
-0.17 &0.64 &-0.29 &0.35\\
0.67 &-0.29 &0.49 &-0.52\\
-0.98 &0.35 &-0.52 &1.31\\
\end{bmatrix} \, .
\]
The noise covariance matrices defining the low-fidelity samples at each level $\ell \in \{1,2,3\}$ are given by
\[
    \bGamma^{(1)} = 0.1 \eye_{4\times 4},\quad \bGamma^{(2)} = 0.5\eye_{4\times 4},\quad \bGamma^{(3)} = \eye_{4\times 4} \, ,
\]
and result in generalized correlations~\eqref{eq:GeneralizedCorrelation}
\[
    \rho_1 \approx 0.93,\quad \rho_2 \approx 0.74,\quad \rho_3 \approx 0.58\, .
\]
We impose sampling costs $c_0 = 1$ and $c_{\ell} = 10^{-\ell-1}$ for $\ell\in \{1,2,3\}$. Setting the computational budget to $B=15$ and taking sample sizes $\left\lfloor \bn^* \right\rfloor$ from Theorem~\ref{thm:OptimalAllocation} yields
\[
    n =  12,\quad  n_1 = 199,\quad  n_2 = 505,\quad n_3 = 2073 \, .
\]
Using the derived sample allocation we draw the corresponding number of samples from each level to obtain the three multi-fidelity estimators: EMF~\eqref{eq:LCV}, LEMF~\eqref{eq:TS}, and the truncated estimator~\eqref{eq:TruncEig}.  
For the truncated multi-fidelity estimator, we set a threshold of $\delta = 10^{-16}$ on the eigenvalues.  
Additionally, we also draw $n_{\mathrm{high}} = B/c  =15$ high-fidelity samples and $n_{\mathrm{low}} = B/c_3 = 1.5 \times 10^{5}$ low-fidelity samples $\by^{(3)}$ to compute single-fidelity estimates with the same computational budget.
The results reported in Table~\ref{table:ToyGaussianMSE} are estimated MSE averaged over 100 independent trials. Across all 100 trials, about 5\% of the EMF estimates were indefinite.

\section{Supplemental to ``Uncertainty Quantification of Steady-State Heat Flow''}\label{appx:UQ}
For demonstration purposes, we show in Figure~\ref{fig:HeatSol} the numerical solution of the steady-state heat-flow problem \eqref{eq:heat_pde} for three different, randomly sampled parameters $\btheta$. Observations $\by^{(\ell)}$, $\ell \in \{0, 1\}$, consist of temperature measurements at 10 equidistant locations over the spatial domain, where $\ell$ corresponds to the number of grid points used to numerically solve \eqref{eq:heat_pde}. Our goal in this example is to estimate the covariance of $\by^{(0)}$, the observations corresponding to the highest number of solver grid points.

We perform a pilot study to estimate the generalized variances and correlations $\sigma_0,\sigma_1,\rho_1$ 
needed to compute the optimal multi-fidelity sample allocations $\bn^{\star} = [n_{\star}^{(0)}, n_{\star}^{(1)}]$ and coefficients $\balpha^{\star}$. We use $10^5$ samples of $\btheta$ for the pilot study to avoid mixing errors of the optimal sample allocation with the MSEs of the estimators. %

Figure~\ref{fig:fwd_uq_mse_extend} is analogus to Figure~\ref{fig:fwd_uq_mse} and shows the MSEs of our LEMF estimator compared to the MSEs of single-fidelity and other multi-fidelity estimators, except that now the minimum and maximum MSEs over all 100 trials are shown with error bars.

Figure~\ref{fig:NumExp:HeatFlowSpeedupSupp}a-b shows speedups of our LEMF estimator with respect to the Euclidean and affine-invariant metric. As the budget is increased, the Euclidean estimator becomes indefinite less frequently, which can be seen in Figure~\ref{fig:NumExp:HeatFlowSpeedupSupp}c.

Figure~\ref{fig:Heat:AllCompoments} visualizes the convergence behavior of various estimators. Our LEMF estimator converges more quickly to the true value of $\bSigma$ element-wise than the single-fidelity estimators; this is particularly true for the diagonal elements of $\bSigma$. The y-scale is comparable to that of Figure~\ref{fig:fwd_uq_other}c and the x-scale shows the costs and is the same as the x-axis of Figure~\ref{fig:fwd_uq_other}c.

\begin{figure}
    \centering
    \resizebox{0.49\textwidth}{!}{\Large \input{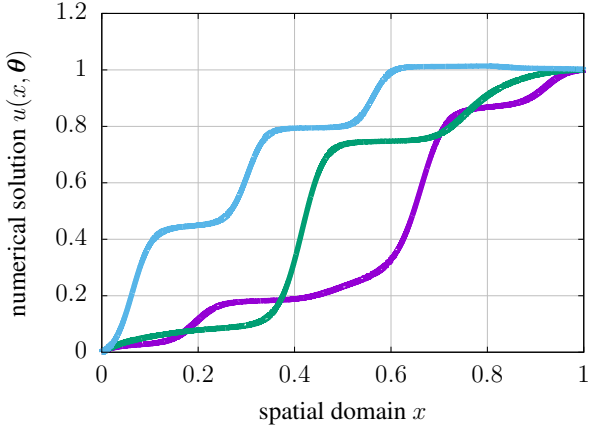}}
    \caption{Supplemental to heat flow example: Three sample solutions (temperature) of the steady-state heat equation~\eqref{eq:heat_pde} corresponding to different random parameters $\btheta \in \mathbb{R}^4$.}
    \label{fig:HeatSol}
\end{figure}

\begin{figure*}
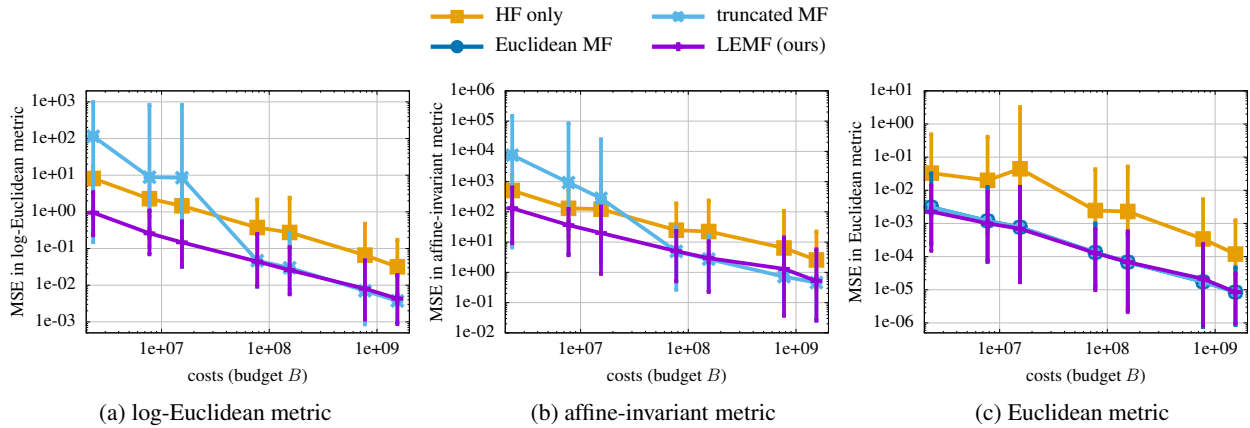

\begin{center}
\setlength\tabcolsep{1pt}
\begin{tabular}{ccc}
\multicolumn{3}{c}{
\resizebox{0.4\textwidth}{!}{\Large \input{figures/fwd_uq_mse_key_eb.tex}}}\vspace*{-2cm}\\
\resizebox{0.35\textwidth}{!}{\Large \input{figures/fwduq_heat_logE_eb.tex}}
&
\hspace*{-0.5cm}\resizebox{0.35\textwidth}{!}{\Large \input{figures/fwduq_heat_aff_eb.tex}}
& 
\hspace*{-0.5cm}\resizebox{0.35\textwidth}{!}{\Large \input{figures/fwduq_heat_frob_eb.tex}}\\
\small (a) log-Euclidean metric & \small (b) affine-invariant metric&
\small (c) Euclidean metric
\end{tabular}
\end{center}
\caption{Supplemental for heat-flow problem: Plots analogous to Figure~\ref{fig:fwd_uq_mse}, except that the minimum and maximum MSE over all 100 trials are shown as error bars. Note that the surrogate-only estimator is not shown to ease exposition. The Euclidean multi-fidelity estimator is shown only in plot (c) because it becomes indefinite and therefore the MSE cannot be computed in the log-Euclidean and the affine-invariant metric.}
\label{fig:fwd_uq_mse_extend}
\end{figure*}

\begin{figure*}
\begin{tabular}{ccc}
\resizebox{0.33\textwidth}{!}{\Large \input{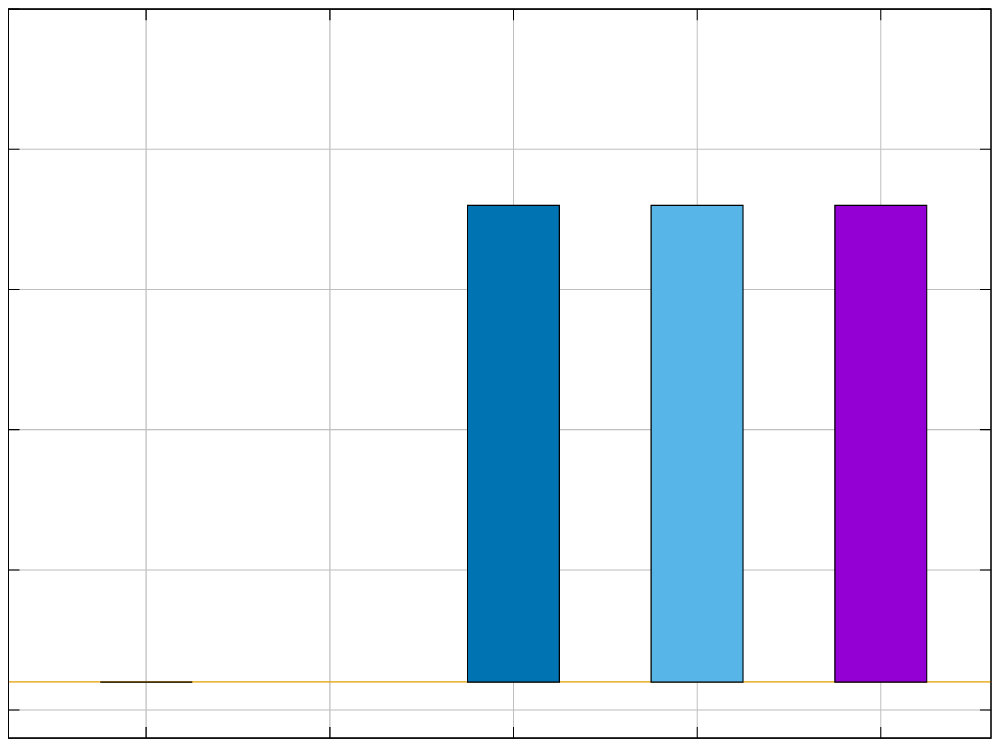}}
&
\hspace*{-0.75cm}\resizebox{0.33\textwidth}{!}{\Large \input{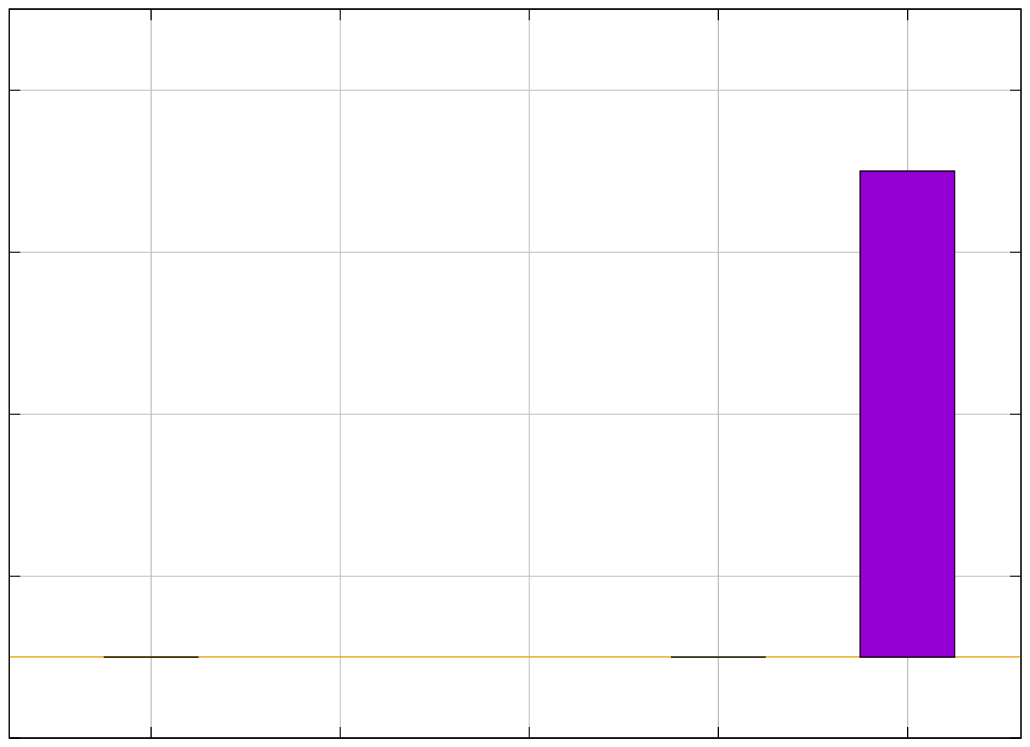}}
&
\hspace*{-0.75cm}\resizebox{0.33\textwidth}{!}{\Large \input{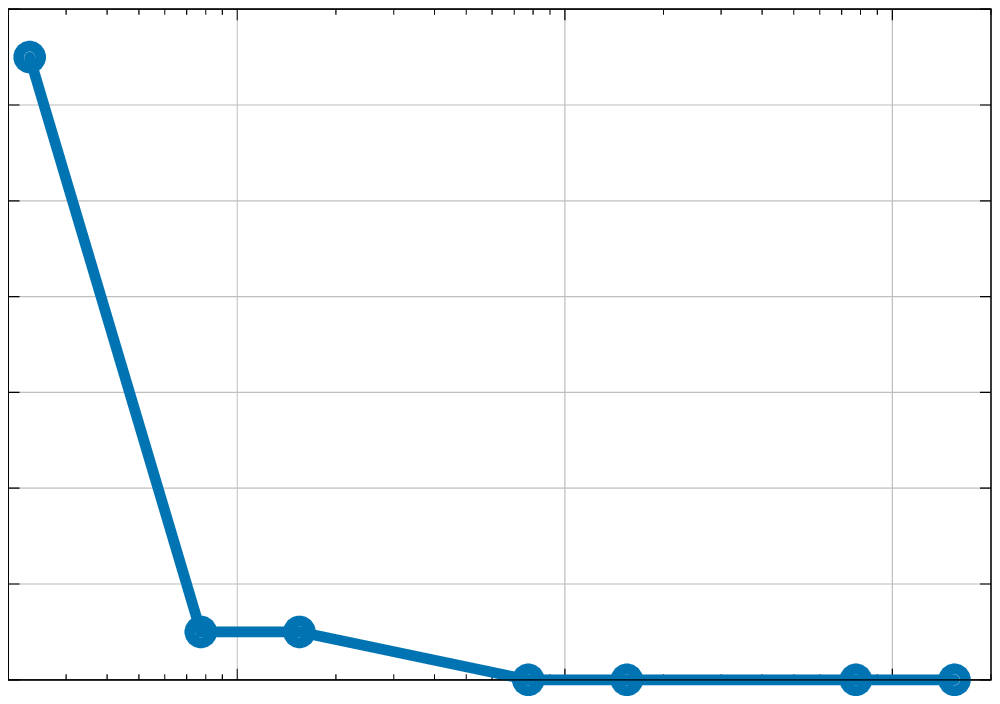}}\\
\small (a) speedup Euc.~metric, tol.~$2\times 10^{-4}$ & \hspace*{-0.75cm}\small (b) speedup affine-invariant metric, tol.~$3\times 10^1$ & \hspace*{-0.75cm}\small (c) indefiniteness of Euc.~MF
\end{tabular}
\caption{Supplemental for heat flow problem: Plots (a) and (b) show the speedups of our LEMF estimator compared to the single-fidelity estimator that uses the high-fidelity samples alone. Note that the Euclidean estimator is indefinite for more than 10\% of the trials and thus the speedup with respect to the affine-invariant metric cannot be computed. Plot (c) shows the percentage of the trails of the Euclidean multi-fidelity estimator that lead to an indefinite covariance matrix estimate.
}
\label{fig:NumExp:HeatFlowSpeedupSupp}
\end{figure*}

\begin{figure*}
\centering
\resizebox{1.0\textwidth}{!}{\tiny \input{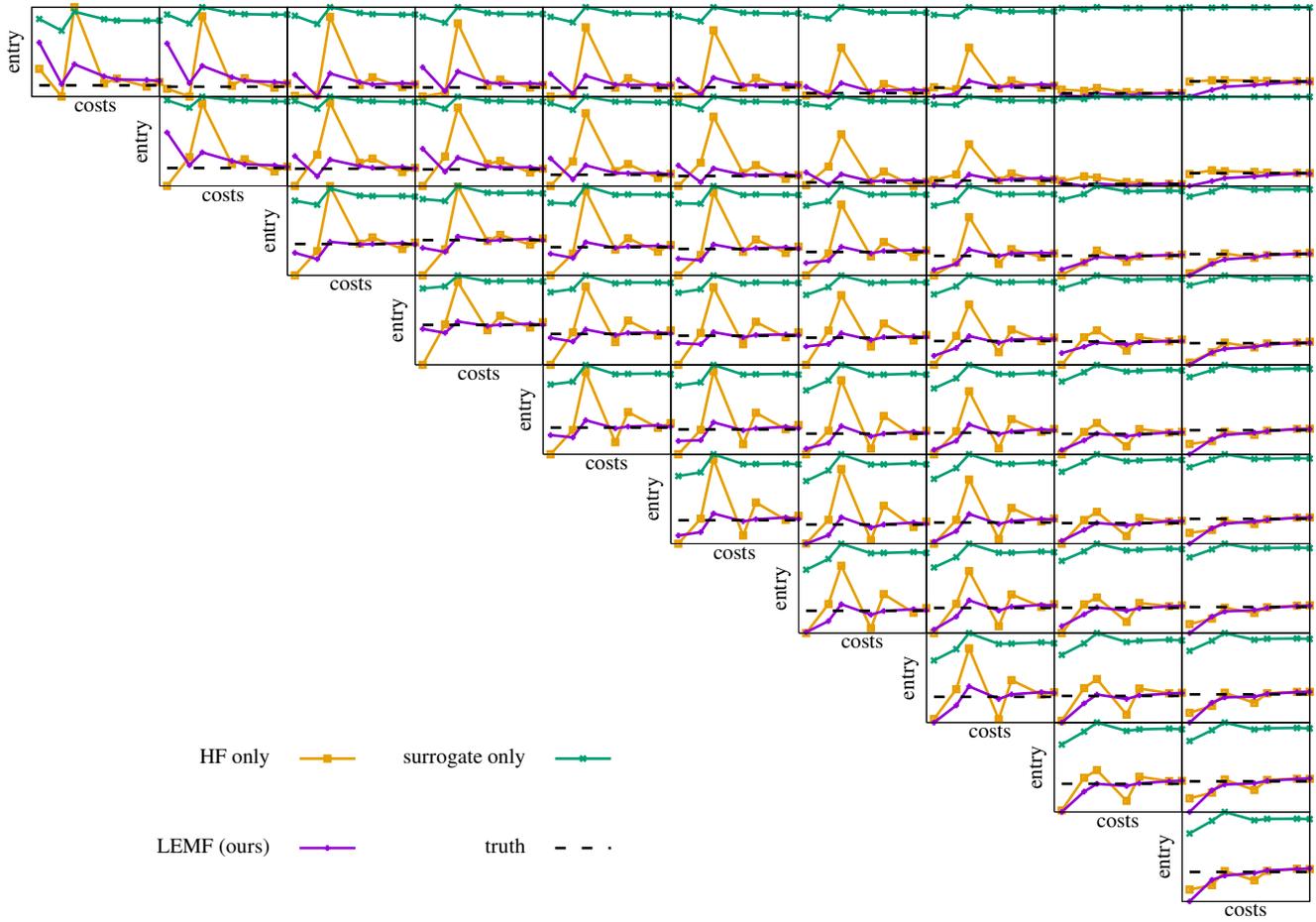}}
\caption{Supplemental for heat-flow problem: The plot shows the convergence behavior of the estimators. Our LEMF estimator is closer to the true value (dashed black) than the single-fidelity estimator, especially for low costs and for diagonal entries of the covariance matrix.}
\label{fig:Heat:AllCompoments}
\end{figure*}

\section{Supplemental to ``Metric Learning: Surface Quasi-Geostrophy''}
\label{appx:SQG}

We follow the surface quasi-geostrophic (SQG) model for the evolution of surface buoyancy $b:\mathcal{X} \times [0,\infty) \to \mathbb{R}$ in the periodic domain $\mathcal{X} = [-\pi,\pi]^2 \times (-\infty,0]$ presented in~\citet{HeldEtAl1985} and~\citet{CapetEtAl2008}.  
\begin{equation}
\begin{split}
\frac{\partial }{\partial t} b(\bx,t;\btheta) + J(\psi, b) &= 0\, ,\quad z = 0\, ,
\\
b &= \frac{\partial}{\partial z}\psi 
\\
\Delta \psi &= 0\, ,\quad z < 0 
\\
\psi &\to 0\, , \quad z \to -\infty \, .
\end{split}
\label{eq:SQGModel}
\end{equation}
Here $\bx = [x,y,z]^{\top} \in \mathcal{X}$ and the Jacobian is
\begin{equation}
    J(\psi, b) = \left( \frac{\partial \psi}{\partial x} \right)\left(\frac{\partial b}{\partial y}  \right) - \left( \frac{\partial b}{\partial x}  \right)\left(\frac{\partial \psi}{\partial y}  \right) \, .
\end{equation}
The SQG equation~\eqref{eq:SQGModel} is solved by applying a Fourier transform and integrating in time with finite differences.  The parameter vector $\btheta = [\theta_1,\ldots,\theta_5]^{\top}$ defines the initial buoyancy $b_0:\mathbb{R}^2 \to \mathbb{R}$ at the surface $z = 0$ as well as the flow.  The initial buoyancy is given by the Gaussian
\begin{equation}
    b_0(x,y;\btheta) = -\frac{1}{(2\pi/ |\theta_5|)^2} \exp\left( -x^2 - \exp(2\theta_1) y^2 \right) \, ,
\end{equation}
where $\theta_1$ is the log aspect ratio that determines the shape of the ellipse, see Figure~\ref{fig:SQG_solution}, and $\theta_5$ controls the amplitude.  
Of the remaining parameters $\theta_2$ is the gradient Coriolis parameter, $\theta_3$ is the log buoyancy frequency, and $\theta_4$ is background zonal flow along the x-axis. %

The high-fidelity random variable consists of nine equally-spaced observations of the solution $b(\bx, T; \btheta)$ with $T=24$, where $b(\cdot, \cdot \;; \btheta)$ is computed numerically using a time step of size $\Delta t = 0.005$ and 256 grid points along each coordinate axis.  
The surrogate random variable consists of observations of the solution at the same locations and time except with $b(\cdot, \cdot \; ; \btheta)$ computed numerically using only 64 grid points along each coordinate axis.
The costs of each sampling each random variable are taken to be proportional to the total number of spatial grid points used in the solvers, $c_0 = 256^2 = 65,536$ and $c_1 = 64^2 = 4,096$. The observation points are shown in plot (c) and (d) in Figure~\ref{fig:SQG_solution}.

The input parameters $\btheta$ are drawn from a Gaussian mixture model where $i \in \{0,1\}$ is a latent variable determining the mixture component  sampled from.  The full distribution of $\btheta$ is 
\[
    p(\btheta \mid i) = N({\boldsymbol{\mu}}_i, {\mathbf{C}}) \, ,\quad i \sim \mathrm{Bernoulli}(1/2)\, ,
\]
where
\[
    {\boldsymbol{\mu}}_0 = [1, 0, 0, 0, 4]^{\top}\, ,\quad 
    {\boldsymbol{\mu}}_1 = [0.1, 0, 0, 0, 4]^{\top} \, ,
\]
and
\[
    {\mathbf{C}} = \begin{bmatrix}
    0.3^2 & & & & \\
    & 0.003^2 & & & \\
    & & 0 & & \\
    & & & 0.08^2 & \\
    & & & & 0.3^2 
    \end{bmatrix} \, .
\]
Note that we fix the third parameter $\theta_3 = 0$, but the covariance matrix of the output observations $\by^{(0)}$ and $\by^{(1)}$ will still be positive definite.

The first class $i=0$ corresponds to an initial buoyancy with a larger log-aspect ratio, which introduces an instability, while the second class $i=1$ corresponds to a smaller log-aspect ratio.
Figure~\ref{fig:SQG_solution} shows the buoyancy at initial and final time for stochastic inputs corresponding to both classes. We sample the class variable $i$ from a Bernoulli distribution with parameter $1/2$, i.e., $i = 0$ and $i = 1$ are drawn with equal probability. %

We define the mean relative error of distances $\mathrm{MRE}({\mathbf{A}})$ for the metric ${\mathbf{A}}$ with respect to the reference metric ${\mathbf{A}}_0$ as
\[
        \mathrm{MRE}({\mathbf{A}}) =  \frac{1}{5000} \sum\nolimits_{i=1}^{5000}  \frac{ | d_{{\mathbf{A}}}(\by_i,\zeros) - d_{\mathbf{A}_0}(\by_i,\zeros) | }{d_{\mathbf{A}_0}(\by,\zeros)}   \, .
    \]

\begin{figure*}
\centering
\begin{tabular}{ccc}
\includegraphics[width=0.40\textwidth]{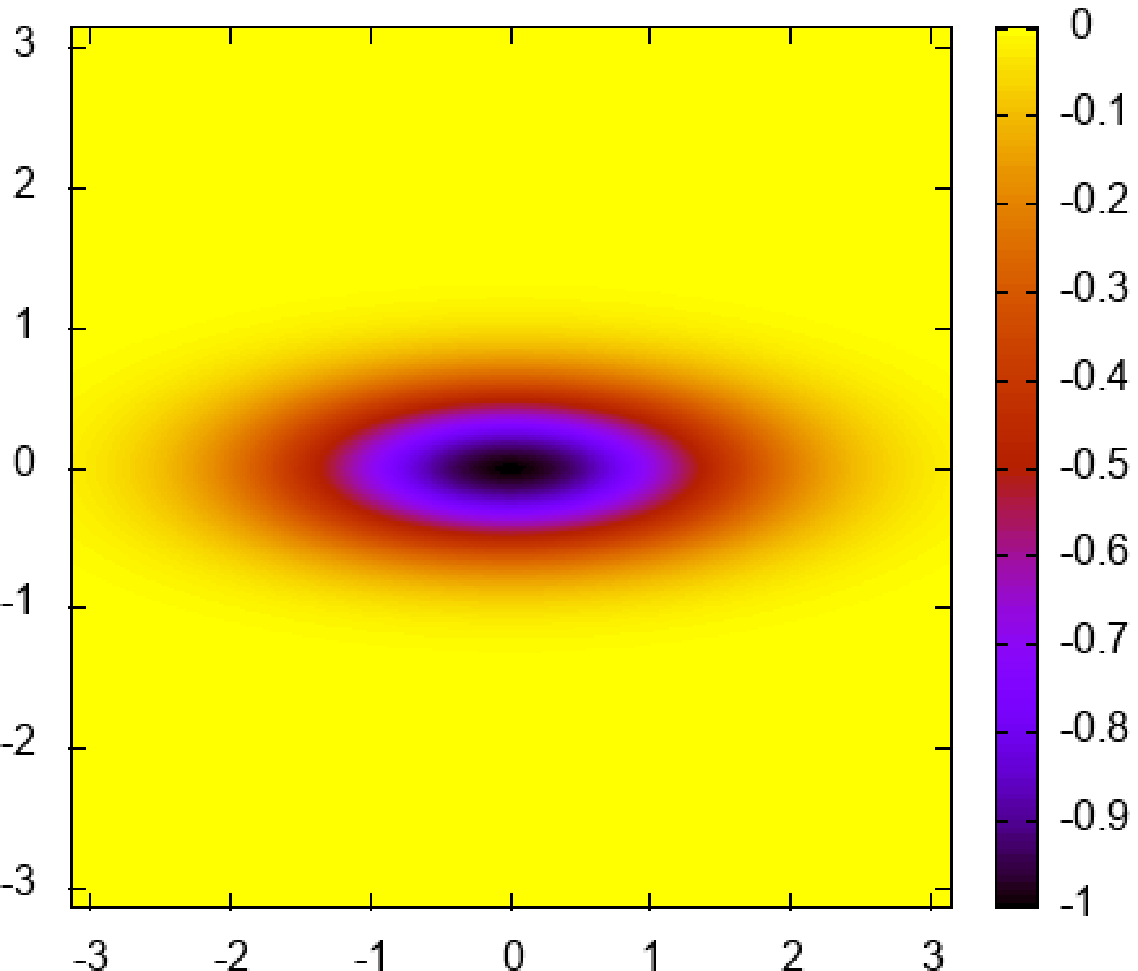}
&
~~~~
&
\includegraphics[width=0.40\textwidth]{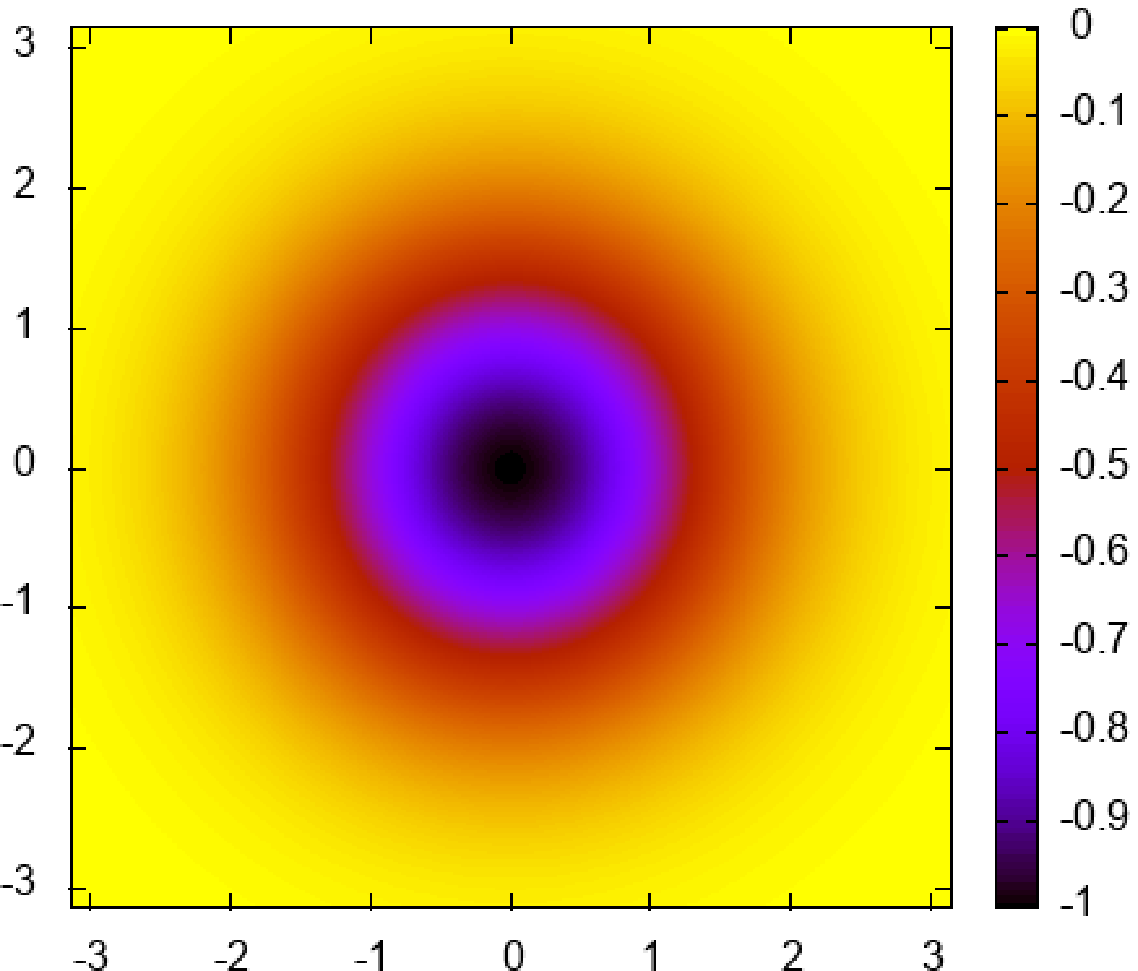}
\\
\small (a) class $i = 0$, initial buoyancy
&&
\small (b) class $i = 1$, initial buoyancy
\\
& \\
\includegraphics[width=0.40\textwidth]{figures/sqg_classA_final.eps}
&&
\includegraphics[width=0.40\textwidth]{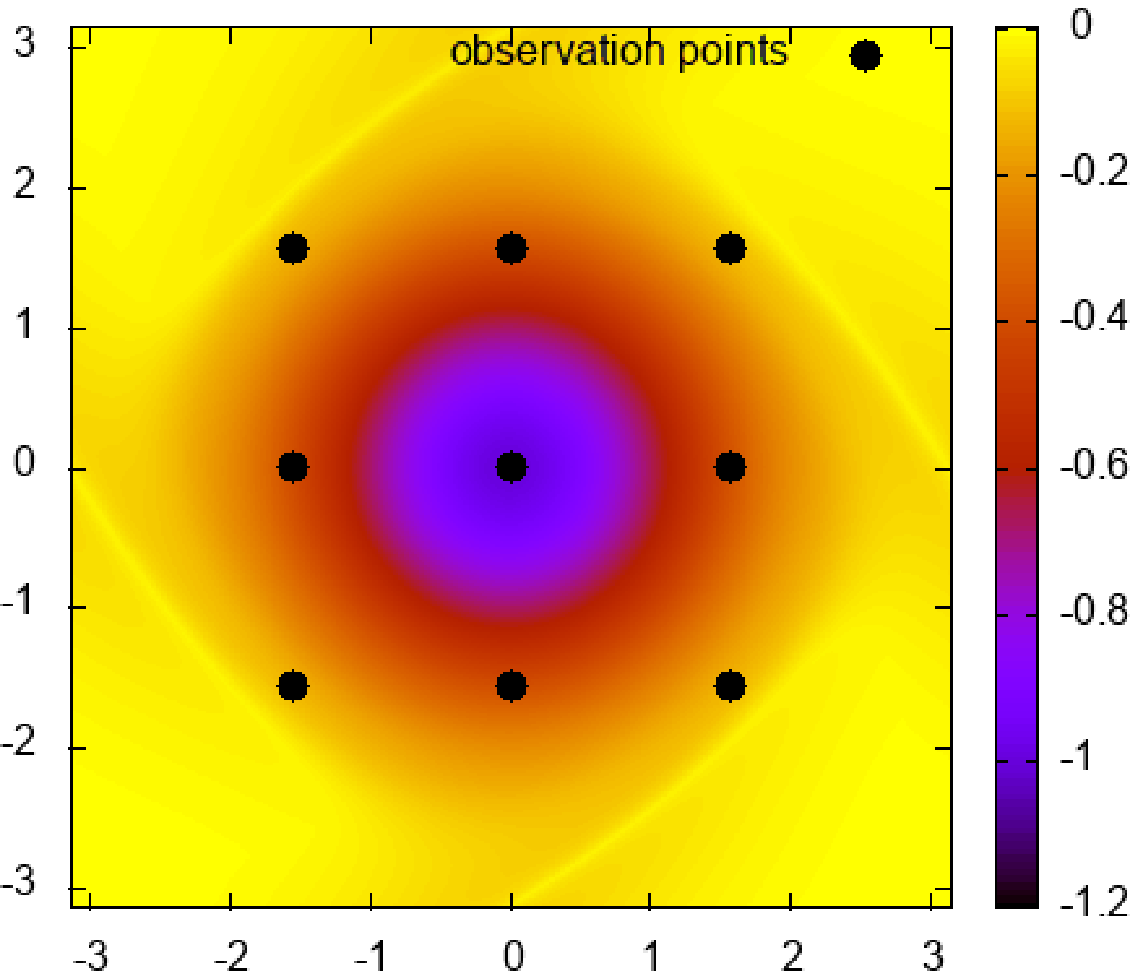}
\\
\small (c) class $i = 0$, final-time buoyancy
&&
\small (d) class $i = 1$, final-time buoyancy
\end{tabular}
\caption{Supplemental for metric learning: Plots show examples of the buoyancy at initial (top) and final time (bottom) for $\btheta$ sampled from class $i = 0$ (left) and $i = 1$ (right). Observations consist of solution values at nine spatial locations in the domain, as depicted in plots (c) and (d). We use the observations to estimate a metric which will distinguish between observations corresponding to $\btheta$ sampled from class $i = 0$ and $\btheta$ sampled from class $i = 1$. }
\label{fig:SQG_solution}
\end{figure*}

\end{document}